\newif\ifcomments
\commentstrue

\documentclass[letterpaper,10pt]{article}
\usepackage[section]{placeins}
\pagestyle{plain}
\bibliographystyle{abbrv}
\AtBeginDocument{%
  \providecommand\BibTeX{{%
    \normalfont B\kern-0.5em{\scshape i\kern-0.25em b}\kern-0.8em\TeX}}}

\newcommand{\Huntsman}{Huntsman\xspace}
\usepackage{flushend}
\newcommand{\TRAP}{{\sc Trap}\xspace}

\newcommand{\mypar}[1]{\noindent\textbf{#1}}




\usepackage[utf8]{inputenc}
\usepackage[T1]{fontenc}
\usepackage{fixltx2e}
\usepackage{graphicx}
\graphicspath{{./}{images/}}
\usepackage{longtable}
\usepackage{float}
\usepackage{wrapfig}
\usepackage{rotating}
\usepackage[normalem]{ulem}
\usepackage{amsmath}
\usepackage{textcomp}
\usepackage{marvosym}
\usepackage{esvect}
\usepackage{wasysym}
\usepackage{hyperref}
\usepackage{cleveref}
\usepackage{mathtools} 
\DeclarePairedDelimiter{\floor}{\lfloor}{\rfloor}
\DeclarePairedDelimiter{\ceil}{\lceil}{\rceil}
\usepackage{color}
\usepackage{xspace}
\definecolor{olive}{rgb}{0.3, 0.4, .1}
\definecolor{pinegreen}{cmyk}{0.92,0,0.59,0.25}
\usepackage[textwidth=0.15\textwidth,textsize=scriptsize]{todonotes}
\tolerance=1000

\usepackage{pifont} 
\usepackage{ dsfont }
\usepackage{algorithm}
\usepackage{algorithmicx}
\usepackage{algpseudocode}
\usepackage{amsthm}
\theoremstyle{plain}
\newtheorem{thm}{Theorem}[section]
\newtheorem{lem}[thm]{Lemma}

\newtheorem*{cor}{Corollary}
\usepackage{macros}
\usepackage{ioa_code_small-1}

\theoremstyle{definition}
\newtheorem{defn}{Definition}[section]

\theoremstyle{remark}

\usepackage{enumitem}
\newcommand{\RNum}[1]{\uppercase\expandafter{\romannumeral #1\relax}}

\usepackage{tikz}
\usetikzlibrary{decorations.pathreplacing,calligraphy}
\tikzstyle{cblue}=[circle, draw, thin,fill=cyan!20, scale=0.8]
\tikzstyle{qgre}=[rectangle, draw, thin,fill=green!20, scale=0.8]
\tikzstyle{rpath}=[ultra thick, opacity=0.4]
\tikzstyle{legend_isps}=[rectangle, rounded corners, thin, 
                       fill=gray!20, text=blue, draw]
                        
\tikzstyle{legend_overlay}=[rectangle, rounded corners, thin,
                           minimum width=2.5cm, minimum height=0.8cm,
                           ]
\tikzstyle{legend_phytop}=[rectangle, rounded corners, thin,
                          minimum width=2.5cm, minimum height=0.8cm,
                          ]
\tikzstyle{legend_general}=[rectangle, rounded corners, thin,
                          top color= white,bottom color=lavander!25,
                          minimum width=2.5cm, minimum height=0.8cm,
                          violet]
\pgfdeclarelayer{background}
\pgfdeclarelayer{foreground}
\pgfsetlayers{background,main,foreground}
\tikzstyle{decision} = [diamond, draw, fill=green!20, 
    text width=4.5em, text badly centered, node distance=3cm, inner sep=0pt]
\tikzstyle{block} = [rectangle, draw, fill=green!20, 
, text centered, rounded corners,align=left]
\tikzstyle{block2} = [rectangle, fill=green!20, 
, text centered, rounded corners,align=left]
\tikzstyle{line} = [draw, -latex']
\tikzstyle{cloud} = [draw, ellipse,fill=purple!20, node distance=3cm,
    minimum height=2em]

\usetikzlibrary{shapes,arrows,positioning,automata,calc,patterns}
\tikzset{node/.style={circle,draw,inner sep=0pt, minimum size=2em}}
\usetikzlibrary{external}
\usetikzlibrary{positioning}
\usetikzlibrary{fit}
\usetikzlibrary{backgrounds}
\usetikzlibrary{calc}
\usetikzlibrary{mindmap}
\usetikzlibrary{decorations.text}
\usetikzlibrary{arrows}

\tikzset{
  treenode/.style = {align=center, inner sep=0pt, text centered,
    font=\sffamily},
  arn_n/.style = {treenode, circle, white, font=\sffamily\bfseries, draw=black,
    fill=black, text width=1.5em},
  arn_r/.style = {treenode, circle, red, draw=red, 
    text width=1.5em, very thick},
  arn_x/.style = {treenode, rectangle, draw=black,
    minimum width=0.5em, minimum height=0.5em}
}
\usetikzlibrary{shapes.multipart,shapes.geometric}
\tikzset{main node/.style={circle,fill=blue!20,draw,inner sep=0pt, minimum size=2em},
}
\tikzset{node message/.style={circle split,fill=blue!20,draw,inner sep=0pt},
}
\tikzset{node home/.style={star,star points=7,star point ratio=0.8,fill=green!20,draw,minimum size=0.5cm,inner sep=0pt},}
\tikzset{man op/.style={ 
        draw,
        trapezium,
        shape border rotate=180,
        text width=2cm,
        align=center,
      },}
    
\usetikzlibrary{positioning}
\usetikzlibrary{fit}
\usetikzlibrary{backgrounds}
\usetikzlibrary{calc}
\usetikzlibrary{shapes}
\usetikzlibrary{mindmap}
\usetikzlibrary{decorations.text}
\usetikzlibrary{arrows}
\usetikzlibrary{shapes,shapes.multipart}
\usetikzlibrary{chains}
\tikzset{main node/.style={circle,fill=blue!20,draw,minimum size=1cm,inner sep=0pt},
}
\tikzset{node message/.style={circle split,fill=blue!20,draw,minimum size=0.5cm,inner sep=0pt},
}
\tikzset{node home/.style={star,star points=7,star point ratio=0.8,fill=green!20,draw,minimum size=0.5cm,inner sep=0pt},}



\usepackage{epstopdf}

\makeatletter
\newcommand*{\inlineequation}[2][]{%
  \begingroup
    \refstepcounter{equation}%
    \ifx\\#1\\%
    \else
      \label{#1}%
    \fi
    \relpenalty=10000 %
    \binoppenalty=10000 %
    \ensuremath{%
      #2%
    }%
    ~\@eqnnum
  \endgroup
}
\makeatother

\begin{document}
\newcommand{\arcomm}[1]{\todo[color=green,bordercolor=black,linecolor=black]{\textsf{\scriptsize\linespread{1}\selectfont ARP: #1}}}
\newcommand{\arcommin}[1]{\todo[inline,color=green,bordercolor=green,linecolor=green]{\textsf{ARP: #1}}}
\newcommand{\system}{[system name] }

\newcommand{\boxedtext}[1]{\fbox{\scriptsize\bfseries\textsf{#1}}}

\newcommand{\greenremark}[2]{
   \textcolor{pinegreen}{\boxedtext{#1}
      {\small$\blacktriangleright$\emph{\textsl{#2}}$\blacktriangleleft$}
    }}

  \definecolor{burntorange}{rgb}{0.8, 0.33, 0.0}
  \definecolor{blue}{rgb}{0.0, 0.0, 0.5}
  \newcommand{\changeremark}[2]{
   \textcolor{burntorange}{\boxedtext{#1}
      {\small$\blacktriangleright$\emph{\textsl{#2}}$\blacktriangleleft$}
}}
\newcommand{\myremark}[2]{
      {  \lowercase{\color{blue}}\boxedtext{#1}
      {\small$\blacktriangleright$\emph{\textsl{#2}}$\blacktriangleleft$}
    }}
  \newcommand{\myremarknew}[2]{
   \textcolor{violet}{\boxedtext{#1}
      {\small$\blacktriangleright$\emph{\textsl{#2}}$\blacktriangleleft$}
}}
\newcommand{\NewRemark}[2]{
   \textcolor{violet}{
      {\small$\blacktriangleright$\emph{\textsl{#2}}$\blacktriangleleft$}
}}

\newcommand{\redremark}[2]{
   \textcolor{red}{\boxedtext{#1}
      {\small$\blacktriangleright$\emph{\textsl{#2}}$\blacktriangleleft$}
    }}
  
\ifcomments
  \newcommand\ARP[1]{\myremark{ARP}{#1}}
  \newcommand\ARPN[1]{\myremarknew{A}{#1}}
  \newcommand\vincent[1]{{\color{red}{VG: #1}}}
  \newcommand\vincentR[2]{{\color{red}{#1 \sout{#2}}}}
  \newcommand{\warning}[1]{\redremark{\fontencoding{U}\fontfamily{futs}\selectfont\char 66\relax}{#1}}
  \newcommand\NEW[1]{\NewRemark{NEW}{#1}}
  \newcommand\TODO[1]{\greenremark{TODO}{#1}}
  \newcommand\CHANGE[1]{\changeremark{CHANGE(?)}{#1}}
\else
  \newcommand\ARP[1]{}
  \newcommand\ARPN[1]{}
  \newcommand\vincent[1]{}
  \newcommand{\warning}[1]{}
  \newcommand\NEW[1]{#1}
  \newcommand\TODO[1]{}
  \newcommand\CHANGE[1]{}
\fi

\newcommand{\gametype}{Consensus\xspace}
\newcommand{\unavoidable}{one-way bait\xspace}
\newcommand{\ragreement}{relaxed agreement\xspace}
\newcommand{\problem}{rational agreement\xspace}
\newcommand{\Pproblem}{Rational agreement\xspace}
\newcommand{\Problem}{Rational Agreement\xspace}
\newcommand{\dfraud}{direct-fraud\xspace}
\newcommand{\zeroloss}{lossfree-reward\xspace}
\newcommand{\scheduler}{scheduler\xspace}
\newcommand{\rdominance}{baiting-dominance\xspace}
\newcommand{\baitingagreement}{baiting-agreement\xspace}
\newcommand{\strong}{strong\xspace}
\newcommand{\interim}{interim\xspace}
\newcommand{\Interim}{Interim\xspace}
\newcommand{\statement}{Let $n$ players execute the \TRAP protocol, out
of which there can be a coalition of $k$ rational players and $t$
Byzantine players such that $n>2(k+t)$ and $n>\frac{3}{2}k+3t$. }
\newcommand{\statementprops}{Suppose
  $m(k,t)=\floor{\frac{k+t-n}{2}+t}+1$ rational players in the coalition play the baiting strategy. Then,\xspace}

\title{\TRAP: The Bait of Rational Players to Solve Byzantine Consensus}

\author{{\rm Alejandro Ranchal-Pedrosa}\\
  \small{University of Sydney} \\
  \small{Sydney, Australia} \\
  \small{alejandro.ranchalpedrosa@sydney.edu.au}
  \and
  {\rm Vincent Gramoli}\\
  \small{University of Sydney and EPFL}\\
  \small{Sydney, Australia} \\
  \small{vincent.gramoli@sydney.edu.au}}

\maketitle

\begin{abstract}
It is impossible to solve the Byzantine consensus problem in an open network of $n$ participants
if only $2n/3$ or less of them are correct.
As blockchains need to solve consensus, one might 
think that blockchains need more than $2n/3$ correct participants.
But it is yet unknown whether consensus can be solved when less than $2n/3$ participants are correct and $k$ participants are rational players, which misbehave if they can gain the loot. 
Trading correct participants for rational players may not seem helpful to solve consensus
since rational players can misbehave whereas correct participants, by definition, cannot.

In this paper, we show that consensus is actually solvable in this model, even with less than $2n/3$ correct participants.
The key idea is a \emph{baiting strategy} that lets rational players pretend to misbehave in joining a coalition but rewards them to betray this coalition before the loot gets stolen.
We propose \TRAP, a protocol that builds upon recent advances in the theory of accountability to
solve consensus as soon as $n>\max\bigl(\frac{3}{2}k+3t,2(k+t)\bigr)$:
by assuming that private keys cannot be forged, this protocol is an equilibrium where no coalition of $k$ rational players can coordinate to increase their expected utility 
regardless of the arbitrary behavior of up to $t$ Byzantine players.

Finally, we show that a baiting strategy is necessary and sufficient to solve this, so-called \emph{\problem} problem. 
First, we show that it is impossible to solve this \problem 
problem without
implementing a baiting strategy. Second, 
the existence of \TRAP demonstrates the sufficiency of the baiting strategy.
Our \TRAP protocol finds applications in blockchains
to prevent players from disagreeing, that could otherwise lead to ``double spending''.
%
\end{abstract} 



\maketitle 

\section{Introduction and Background}
\label{sec:intro}
Consider $n$ players, each with some initial value. Solving the Byzantine consensus problem consists of designing a protocol guaranteeing that the $n-t$ non-Byzantine players agree by outputting the same value and despite the presence of 
up to $t$ arbitrarily behaving Byzantine players. 
Standard cryptography, which permits oblivious signed transfers and assumes computationally bounded players, has recently been used to build undeniable proofs identifying
the players that led the system to a disagreement~\cite{civit2020brief,CGG21}. 
Although this construction has not been used in the game theoretical context, one can intuitively see its application to implement a \emph{baiting strategy}
that incentivizes rational players to solve blockchain consensus:
the idea is to reward rational players to pretend
to join a coalition in order to deceit the coalition by exposing undeniable 
\emph{proofs-of-fraud} (PoFs).

Placed in the game theory context, one can see a consensus 
protocol among rational players as a Nash equilibrium, however, a Nash equilibrium only prevents one rational player from increasing its utility by deviating solely, but it fails at preventing multiple rational players from increasing their utility by colluding and by all deviating together. 
The resilience against this type of coalition, mentioned originally in the 50s~\cite{Au59}, is needed to solve consensus despite a coalition of $k$ players.
Ben-Porath~\cite{Ben03} shows that one can simulate a Nash equilibrium with a central trusted mediator provided that there is a punishment strategy
to threaten rational players in case they deviate and 
Heller~\cite{Hel10} strengthens Ben-Porath's result to allow coalitions. Abraham et al.~\cite{ADGH} applied these results to secret sharing in the fully distributed setting, by showing that one 
can simulate a mediator with cheap talks and 
assuming the same standard cryptography we assume.
A $(k,t)$-punishment strategy guarantees that if up to $k$ rational players deviate, then more than $t$ non-deviating players, by playing the punishment strategy, can lower the utility of these rational players.

Another challenge when making consensus Byzantine fault tolerant is for the equilibrium to be immune 
to $t$ Byzantine players that act arbitrarily or whose utility functions 
are unknown.
%
Abraham et al.~\cite{ADGH} were the first to
formalize 
$k$-resilience, $t$-immunity and $\epsilon$-$(k,t)$-robustness.
A protocol is a $k$-resilient equilibrium if no rational coalition of size up to $k$
can increase their utility by deviating in a coordinated way.
A protocol is $t$-immune if the expected utility of the $n-t$ non-faulty players is not decreased regardless of the arbitrary behavior of up to $t$ Byzantine players.
A protocol is $\epsilon$-$(k,t)$-robust if no coalition of $k$ rational players can coordinate to increase their expected utility by $\epsilon$
regardless of the arbitrary behavior of up to $t$ Byzantine players, even if the Byzantine players join their coalition, where $\epsilon$ accounts for the (small) probability of breaking the cryptography.

Considering rational players alongside the well-known Byzantine faults to ensure the agreement property of the consensus problem allows us to evaluate protocols against coalitions of players that are incentivized to deviate for their own benefit or to break the system, rather than just because of their faulty nature. 
This \emph{rational agreement} problem finds application in blockchains where participants can be modeled as players incentivized to gain valuable cryptocurrency assets.
In 2020, these assets incentivized players to deviate from their blockchain protocol by forcing a disagreement to fork the blockchain, leading to a double spending of US$\mathdollar 70,000$ in Bitcoin Gold\footnote{\href{https://news.bitcoin.com/bitcoin-gold-51-attacked-network-loses-70000-in-double-spends/}{https://news.bitcoin.com/bitcoin-gold-51-attacked-network-loses-70000-in-double-spends/}} and US$\mathdollar 5.6$ million in Ethereum Classic\footnote{\href{https://news.bitcoin.com/5-6-million-stolen-as-etc-team-finally-acknowledge-the-51-attack-on-network/}{https://news.bitcoin.com/5-6-million-stolen-as-etc-team-finally-acknowledge-the-51-attack-on-network/}}.
Solving rational agreement is key to avoid these losses because it ensures that players agree on a unique block at each index of the chain, thus preventing these forks from being exploited to double spend.  
%
Interestingly, some blockchains already require participants to deposit cryptocurrency assets in the form of staking and these
deposits could be used by the consensus protocol to incentivize their owners to behave. Unfortunately, these blockchains fork in the presence of rational players and we are not aware of any solution to this rational agreement problem.

\subsection{Our result}
In this paper, we show that a \emph{baiting strategy}, that incentivize rational players to bait deviating players into a trap,
is necessary and sufficient
to solve this \emph{rational agreement} problem by offering a consensus protocol that is
robust to a coalition of up to $k$ rational players and $t$ Byzantine players.
Our first contribution is thus to formalize the notion of a baiting strategy and show that this baiting strategy is 
necessary to solve the rational agreement problem.
Motivated by open networks, where blockchains typically operate, and where the synchrony assumption~\cite{DLS88} in unrealistic, we then solve the rational agreement problem without assuming a known bound on the delay of every message. 
As solving the consensus problem is impossible with complete asynchrony~\cite{FLP85}, we consider the partially synchronous model where there is an unknown bound on the delay of messages~\cite{DLS88}.
This solution bypasses the well-known requirement of $2n/3$ correct participants~\cite{LSP82}, by solving 
consensus when $n>\max(\frac{3}{2}k+3t,2(k+t))$. 
For example if $n=7$ players, then our solution solves consensus with only 4 correct participants, hence tolerating $k=1$
rational player and $t=2$ Byzantine players.

We implement this solution in a new protocol, called \TRAP (\textbf{T}ackling \textbf{R}ational \textbf{A}greement through \textbf{P}ersuasion),
that is $\epsilon$-$(k,t)$-robust when $n>\max(\frac{3}{2}k+3t,2(k+t))$.
\TRAP rewards a single player to expose its coalition by generating proofs-of-fraud with the 
Polygraph accountable consensus protocol~\cite{CGG21}.
Making initially colluding players decide on whether to betray the coalition regardless of the coalition behavior
is analogous to reducing the extensive-form game into a normal-form game for this particular decision.
In addition, if the reward for exposing the coalition is greater than the individual payoff for causing a disagreement, 
rational players find that the coalition strategy is strictly dominated by the baiting strategy 
in the extensive-form game. This shows similarities with the prisoner's dilemma, in which all rational players prefer to 
betray the coalition than to collude.

\begin{figure}[tp]
  \hspace{-6em}
    \pgfsetlayers{background,main,foreground}
    \begin{tikzpicture}[node distance=1cm,auto,>=stealth']
      \tikzstyle{every node}=[font=\small]
      \node[] (byz) {$t$};
   \node[right = of byz,xshift=1.5em] (km) {$k-m$};
   \node[right = of km,xshift=1.5em] (m) {$m$};
   \node[right = of m,xshift=4.5em] (a) {$\color{blue}A$};
   \node[right = of a,xshift=4.5em] (b) {$\color{red}B$};
   \draw [decorate, 
    decoration = {calligraphic brace,
        raise=5pt,
        amplitude=5pt}] (a.west) --  (b.east)
      node[pos=0.45,above=10pt,black]{$A\cap B=\emptyset$, partition of correct};

      \draw [decorate, 
    decoration = {calligraphic brace,
        raise=5pt,
        amplitude=5pt}] (byz.west)+(-0.2,0) -- (0.3,0)+(byz.east)
      node[pos=0.30,above=10pt,black]{Byzantines};

      \draw [decorate, 
      decoration = {calligraphic brace,
        raise=5pt,
        amplitude=5pt}] (km.west) -- (m.east)
      node[pos=0.53,above=10pt,black,align=center]{$k$ rationals, of which $m$ bait};

   \def\vara{4}
   \node[below of=byz, node distance=\vara cm] (byz_ground) {};
   \node[below of=km, node distance=\vara cm] (km_ground) {};
   \node[below of=m, node distance=\vara cm] (m_ground) {};
   \node[below of=a, node distance=\vara cm] (a_ground) {};
   \node[below of=b, node distance=\vara cm] (b_ground) {};
   
   \def\varb{0.22}
   \draw (byz) -- ($(byz)!\varb!(byz_ground)$);
   \draw (km) -- ($(km)!\varb!(km_ground)$);
   \draw (m) -- ($(m)!\varb!(m_ground)$);
   \draw (a) -- ($(a)!\varb!(a_ground)$);
   \draw (b) -- ($(b)!\varb!(b_ground)$);      

   \def\varc{0.36}
   \def\vard{1.05}
   \draw ($(byz)!\varc!(byz_ground)$) -- ($(byz)!\vard!(byz_ground)$);
   \draw ($(km)!\varc!(km_ground)$) -- ($(km)!\vard!(km_ground)$);
   \draw ($(m)!\varc!(m_ground)$) -- ($(m)!\vard!(m_ground)$);
   \draw ($(a)!\varc!(a_ground)$) -- ($(a)!\vard!(a_ground)$);
   \draw ($(b)!\varc!(b_ground)$) -- ($(b)!\vard!(b_ground)$);

   \def\vare{0.25}
   \def\varf{0.36}
   \draw[loosely dotted, line width=0.5mm] ($(byz)!\vare!(byz_ground)$)-- ($(byz)!\varf!(byz_ground)$);
   \draw[loosely dotted, line width=0.5mm] ($(km)!\vare!(km_ground)$)-- ($(km)!\varf!(km_ground)$);
   \draw[loosely dotted, line width=0.5mm] ($(m)!\vare!(m_ground)$)-- ($(m)!\varf!(m_ground)$);
   \draw[loosely dotted, line width=0.5mm] ($(a)!\vare!(a_ground)$)-- ($(a)!\varf!(a_ground)$);
   \draw[loosely dotted, line width=0.5mm] ($(b)!\vare!(b_ground)$)-- ($(b)!\varf!(b_ground)$);
   

   \def\varg{0.05}
   \def\varh{0.2}
   \draw[-latex',color=blue,dashed] ($(byz)!\varg!(byz_ground)$) -- node[below,near start]{
   } ($(a)!\varh!(a_ground)$);
   \draw[-latex',color=red,densely dotted] ($(byz)!\varg!(byz_ground)$) -- node[below,near start]{
   } ($(b)!\varh!(b_ground)$);

   \draw[-latex',color=blue,dashed] ($(km)!\varg!(km_ground)$) -- node[below,near start]{
   } ($(a)!\varh!(a_ground)$);
   \draw[-latex',color=red,densely dotted] ($(km)!\varg!(km_ground)$) -- node[below,near start]{
   } ($(b)!\varh!(b_ground)$);

   \draw[-latex',color=blue,dashed] ($(m)!\varg!(m_ground)$) -- node[below,near start]{
   } ($(a)!\varh!(a_ground)$);
   \draw[-latex',color=red,densely dotted] ($(m)!\varg!(m_ground)$) -- node[below,near start]{
   } ($(b)!\varh!(b_ground)$);

   \def\vari{0.23}
   \def\varj{0.20}
   \def\vark{0.37}
   \def\varm{0.5}
   \node[xshift=2.4em] (a_predec) at ($(a)!\vari!(a_ground)$) {\ding{192~}{\scriptsize \color{blue}$\lit{predec}(v_A)$}};
   \node[xshift=2.4em] (b_predec) at ($(b)!\vari!(b_ground)$) {\ding{192~}{\scriptsize \color{red}$\lit{predec}(v_B)$}};
   \node[xshift=-6em,align=center,style={rectangle,draw},minimum width=11em] (exp1) at ($(byz)!\varj!(byz_ground)$) {\footnotesize \ding{192}~$t+k$ lead $A$ and $B$\\\footnotesize to disagreement\\ on predecisions};

   \draw[-latex',color=blue,dashed] ($(byz)!\vark!(byz_ground)$) -- node[below,near start]{
   } ($(a)!\varm!(a_ground)$);
   \draw[-latex',color=red,densely dotted] ($(byz)!\vark!(byz_ground)$) -- node[below,near start]{
   } ($(b)!\varm!(b_ground)$);

   \draw[-latex',color=blue,dashed] ($(km)!\vark!(km_ground)$) -- node[below,near start]{
   } ($(a)!\varm!(a_ground)$);
   \draw[-latex',color=red,densely dotted] ($(km)!\vark!(km_ground)$) -- node[below,near start]{
   } ($(b)!\varm!(b_ground)$);
   \node[align=center,xshift=0.5em] (dingaux1) at ($(a)!\varm!(a_ground)$) {\ding{193}};
   \node[align=center,xshift=0.5em] (dingaux1) at ($(b)!\varm!(b_ground)$) {\ding{193}};
   \ding{193}~

   \def\varn{0.5}
   \def\varo{0.8}
   \node[xshift=-6em,align=center,style={rectangle,draw},minimum width=11em,yshift=-0.28em] (exp2) at ($(byz)!\varn!(byz_ground)$) {\ding{193}~\footnotesize $A$ and $B$ cannot terminate\\\footnotesize without hearing from at least\\\footnotesize $1$ of the $m$ baiters};

   \draw[-latex'] ($(a)!\varj!(a_ground)$) -- node[below,pos=0.5, sloped]{ \scriptsize $\ms{certificate}_{{\color{blue}(v_A)}}$
   } ($(m)!\varo!(m_ground)$);

   \draw[-latex'] ($(b)!\varj!(b_ground)$) -- node[below,near end, sloped]{ \scriptsize $\ms{certificate}_{{\color{red}(v_B)}}$
   } ($(m)!\varo!(m_ground)$);

   \node[align=center,xshift=-0.5em] (dingaux2) at ($(m)!\varo!(m_ground)$) {\ding{194}};

   \node[xshift=-6em,yshift=-0.5em,align=center,style={rectangle,draw},minimum width=11em] (exp3) at ($(byz)!\varo!(byz_ground)$) {\ding{194}~\footnotesize $m$ baiters wait to receive\\\footnotesize certificates to construct PoFs};

   \def\varp{0.85}

   \def\varq{0.86}
   \def\varr{0.93}
   \draw[-latex'] ($(m)!\varq!(m_ground)$) -- node[below,pos=0.5, sloped,align=center]{ \scriptsize share PoFs from $\{{\color{blue}v_A},{\color{red}v_B}\}$
   } ($(a)!\varr!(a_ground)$);

   \draw[-latex'] ($(m)!\varq!(m_ground)$) -- node[above,pos=0.7, sloped,align=center]{ \scriptsize share PoFs from $\{{\color{blue}v_A},{\color{red}v_B}\}$
   } ($(b)!\varr!(b_ground)$);

   \node[align=center,xshift=0.5em] (dingaux2) at ($(a)!\varr!(a_ground)$) {\ding{195}};
   \node[align=center,xshift=0.5em] (dingaux2) at ($(b)!\varr!(b_ground)$) {\ding{195}};
   
   \node (aux) at ($(byz)!0.5!(b)$) {};
   \node (aux_ground) at ($(byz_ground)!0.5!(b_ground)$) {};
   \node[align=center,style={rectangle,draw}, minimum width=35em,xshift=-3em,yshift=-0.4em] at ($(aux)!1.2!(aux_ground)$){\ding{195}~partitions of correct players $A$ and $B$ $(i)$ discover disagreement\\ $(ii)$ select a winner of the reward at random out of the $m$ baiters,\\$(iii)$ punish the rest of $t+k-1$ deviants, and $(iv)$ resolve the disagreement on predecisions to agree on decision};
 \end{tikzpicture}
 \caption{Example execution of the \Huntsman protocol. First, \ding{192}~ all $t$ Byzantine and $k$ rational players collude to cause a disagreement on the output of the accountable consensus protocol, resulting in $A$ and $B$ predeciding different outputs. Then, \ding{193}~ $m$ of the $k$ rational players decide to bait while executing the BFTCR protocol, preventing $A$ and $B$ from deciding their disagreeing predecisions. As such, \ding{194}~the $m$ baiters wait until they receive proof of the disagreement on predecisions, to then \ding{195}~prove the disagreement committing to and revealing the proofs-of-fraud in the BFTCR protocol. Hence, neither $A$ nor $B$ decide their conflicting predecisions, but instead reward one of the $m$ baiters, punish the rest of $t+k-1$ players responsible for the disagreement on predecisions, and resolve the disagreement, deciding one of $v_A$ or $v_B$, or, depending on the application, merging both.}
 \label{fig:diag}
  \end{figure}

More specifically, our protocol ``pre-decides'' the decisions from Polygraph~\cite{CGG21} that it extends with 
the \emph{Byzantine Fault Tolerant Commit-Reveal protocol (BFTCR)}, which consists of two reliable
broadcasts and one additional broadcast. As we show in
Figure~\ref{fig:diag}, by offering a reward the \TRAP protocol can
convince $m$ rational players to betray the coalition after they
helped cause a disagreement on predecisions. First, the coalition exposes itself
causing a disagreement on predecisions. Second, the $m$ rational players
from the coalition that decide to betray are enough to pause
termination of the BFTCR protocol. Third, these players can wait to
gather enough PoFs of the disagreement on predecisions, which they
will get by the property of accountability (since they caused a
disagreement on the output of the Polygraph
protocol). Fourth and finally, once they get PoFs to prove the disagreement to correct players and get the reward, these $m$ players
commit and reveal the PoFs by sharing them during the BFTCR protocol,
after which one of them will be selected at random to get the
reward. We detail further this example in Appendix~\ref{sec:exfig}.

Adding this BFTCR phase ensures
the existence of a baiting strategy (\rdominance) and that the protocol still solves agreement even after playing the
baiting strategy (\baitingagreement).
We also add an additional property, \zeroloss, which states that the
increase in utility for baiting rational players comes at no cost to
non-deviating players. For this purpose, we introduce a deposit per
player, so that the system can always pay the
reward by taking the deposits of the proven coalition at no cost for
non-deviating players.

  \subsection{Related work} \label{sec:relwork}
Considering fault tolerant distributed protocols as games requires to 
cope with a mixture of up to $k$ rational players and $t$ faulty players. 
The idea of mixing rational players with faulty players has already been extensively explored
in the context of secret sharing and multi-party computation~\cite{lysyanskaya2006rationality,FKN10,DMRS11,ADGH}. 
In particular, the central third-party mediator that is typically relied upon was implemented with synchronous \emph{cheap talks}~\cite{ADGH}, that are communications of negligible cost through private pairwise channels. This extension was indeed illustrated with an $\epsilon$-$(k,t)$-robust secret sharing protocol where $n>k+2t$.
It was later shown~\cite{abraham2019implementing} that mediators could 
be implemented with asynchronous cheap talks in an $\epsilon$-$(k,t)$-robust
protocol when $n>3(k+t)$. 
%
This adaptation makes it impossible to devise even a $1$-immune 
protocol that would solve the consensus problem~\cite{LSP82} as the 
communication model becomes asynchronous~\cite{FLP85}.
In this paper, we focus instead on the partially synchronous model, 
where the bound on the delay of messages is unknown~\cite{DLS88}, 
to design a protocol that solves consensus among $n$ players, 
where up to $t$ are Byzantine players and $k$ are rational players.

Consensus has been explored in the context of game theory.
Some works focused on the conditions under which termination and validity
is obtained for a non-negligible cost of communication and/or local
computation~\cite{Vilaca2012,Amoussou-guenou2020}, without considering
the incentives for rational players to cause a disagreement. 
This incentive is quite apparent in the blockchain context, where Bitcoin users 
hacked the network to double spend by simply leading sets of players to a disagreement (or fork) for long enough\footnote{\url{https://www.cnet.com/news/hacker-swipes-83000-from-bitcoin-mining-pools/}.}.
Some results consider the problem of consensus in the presence of rational players but do not consider failures~\cite{GKT12}.
Leader election~\cite{ADH19}, which can be used to solve consensus indirectly, and 
consensus proposals~\cite{HV20} focus on ensuring fairness defined as all players
having an equal probability of their proposal being decided.
%
Some proposals study consensus and mix faulty players with rational players~\cite{AGL14,BCH21}, however, they consider the synchronous communication model.

Several research results focus
more particularly on agreement, with some deriving from
the BAR (Byzantine-Altruistic-Rational) model. However, these works
considered either no Byzantine players~\cite{Groce2012,ebrahimi2019getting},
no coalitions of rational players~\cite{aiyer2005bar}, synchrony~\cite{Groce2012,harel2020consensus,vilacca2011n,HarzGGK19} or
solution preference~\cite{harel2020consensus}.
By assuming a larger payoff for agreeing than for disagreeing, 
solution preference requires that rational players never have an incentive to sabotage agreement.
To the best of our knowledge, we present the first work that considers
bounds for the robustness of agreement against coalitions of
Byzantine and rational players in partial synchrony.

The baiting strategy that we introduced to reward traitors of a coalition 
is very similar to the betrayal used in
the context of verifiable cloud-computing for counter-collusion contracts, assuming that the third party hosting the contracts is trusted~\cite{dong2017betrayal}. 
Our BFTCR protocol presents a novel implementation to select the winner of the baiting reward without a trusted third party.

There are a number of advantages of BFTCR compared to other state-of-the-art
protocols. One may think that a solution similar to submarine
commitments~\cite{breidenbach2018enter} would work as well by, for
example, hiding the proofs-of-fraud in a decision. However, such a solution does
not prevent Byzantine players from always hiding in a submarine commitment
their proofs-of-fraud, and revealing them only if a rational player reveals their
submarine commitment, which can act as a deterrent for
rational players to not betray the coalition. 
Additionally, other protocols based on
zero-knowledge proofs~\cite{kosba2016hawk} explicitly reveal the
existence of an information to prove, which gives an additional
advantage to other players in the coalition to also claim the same
knowledge.

State-of-the-art verifiable secret-sharing
(VSS) schemes~\cite{SPURT,HAVSS,randshare,abraham2021} are not a good
fit either, since there are no secret-sharing schemes in partial
synchrony that tolerate coalitions of size greater than a third
of the participants. To the best of our knowledge, BFTCR is the first
protocol that implements baiting strategies for consensus tolerating
coalitions of up to $k$ rational and $t$ Byzantine players as long as
$n>\max\left(\frac{3}{2}k+3t,2(k+t)\right)$.

\subsection{Roadmap}
  The rest of the paper is structured as follows:
Section~\ref{sec:model} presents our model and preliminary definitions,
Section~\ref{sec:imp} introduces the definition of a baiting strategy
and shows that it is impossible to solve the \problem problem without
a baiting strategy, in Section~\ref{sec:TRAP} we present the
\TRAP protocol and its correctness, and we finally conclude
in Section~\ref{sec-9}.

\section{Preliminaries}
\label{sec:model}
We consider a partially synchronous communication network, in which
messages can be delayed by up to a bound 
that is unknown.
For this purpose, 
we adapt the synchronous and asynchronous models of Abraham et
al.~\cite{ADGH,abraham2019implementing} to partial synchrony.
We consider a game played by a set $N$ of $|N|=n$ players, each of type in $\mathcal{T}=\{\text{Byzantine, rational, correct}\}$. The game is in \emph{extensive form}, described by a game tree whose
leaves are labeled by the utilities $u_i$ of each player $i$. We introduce the \scheduler as an additional player that will model the delay on messages derived from partial synchrony.
 We assume that
players alternate making moves with the \emph{\scheduler}: first the
\scheduler moves, then a player moves, then the \scheduler moves and
so on. The scheduler's move consists of choosing a player $i$ to move
next and a set of messages in transit to $i$ that will be delivered
just before $i$ moves (so that $i$'s move can depend on all the
messages $i$ delivers). Every non-leaf node is associated with either
a player or the \scheduler. The \scheduler is bound to two
constraints. First, the \scheduler can choose to delay any message $\ms{msg}$ up
to a bound, known only to the \scheduler, before which he must have
chosen all recipients of $\ms{msg}$ to move and provided them with
this message, so that they deliver it before making a move. Second,
the \scheduler must eventually choose all players that are still
playing. That is, if player $i$ is playing at time $e$, then $i$ is chosen to play at time $e'\geq e$.

Each player $i$ has some \textit{local state} at each node, which
translates into the initial information known by $i$, the messages $i$
sent and received at the time that $i$ moves, and the moves that $i$
has made. The nodes where a player $i$ moves are further partitioned
into \textit{information sets}, which are sets of nodes
in the game tree that contain the same local state for the same
player $i$, in that $i$ cannot distinguish them. We assume that
the \scheduler has complete information, so that the \scheduler{}’s
information sets consist of the singletons.

Since we do not assume synchrony, we need our game to be able to
continue even if a faulty player decides not to
reply. As such, w.l.o.g. we assume that players that decide not to play will at
least play the \textit{default-move}, which consists of notifying 
the \scheduler that this player will not move, so that the game
continues with the \scheduler choosing the next player to move. Thus,
in every node where the scheduler is to play a move, the \scheduler
can play any move that combines a player and a subset of messages that
such player can deliver before playing.  Then, the selected player
moves, after which the \scheduler selects again the next player for
the next node, and the messages it receives, and so on. The \scheduler
alternates thus with one player at each node down a path in the game
tree until reaching a leaf. A \textit{run} of the game is then a downward
path in the tree from the root to a leaf.

\mypar{Strategies.} We denote the set of actions of a player $i$ (or the \scheduler) as
$A_i$ (or $A_s$), and a strategy $\sigma_i$ for that set of actions is denoted as
a function from $i$'s information sets to a distribution over the
actions. We denote the set of all possible strategies of player $i$ as
$\mathcal{S}_i$. Let $\mathcal{S}_I=\Pi _{i\in I} \mathcal{S}_i$ and
$A_I=\Pi_{i\in I} A_i$ for a subset $I\subseteq N$. Let
$\mathcal{S}=\mathcal{S}_N$ with $A_{-I}=\Pi _{i\not\in I} A_i$ and
$\mathcal{S}_{-I}=\Pi _{i\not\in I} \mathcal{S}_i$.  A \textit{joint
strategy} $\vv{\sigma}=(\sigma_0,\sigma_1,...,\sigma_{n-1})$ draws
thus a distribution over paths in the game tree (given the scheduler's strategy $\sigma_s$), where $u_i(\vv{\sigma},\sigma_s)$ is player's $i$ expected utility if $\vv{\sigma}$ is played along with a strategy for the scheduler $\sigma_s$. A strategy $\theta_i$ \textit{strictly dominates} $\tau_i$ for $i$ if for all $\vv{\phi}_{-i}\in \mathcal{S}_{-i}$ and all strategies $\sigma_s$ of the \scheduler we have $u_i(\theta_i,\vv{\phi}_{-i},\sigma_s)>u_i(\tau_i,\vv{\phi}_{-i},\sigma_s)$.

Given some desired functionality
$\mathcal{F}$, a \textit{protocol} is the recommended joint strategy
$\vv{\sigma}$ whose outcome satisfies $\mathcal{F}$ for all strategies $\sigma_s$ of the \scheduler, and an \textit{associated
game} $\Gamma$ for that protocol is defined as all possible deviations
from the protocol~\cite{ADGH}. In this case, we say that the protocol
$\vv{\sigma}$ \textit{implements} the functionality. Note that
both the \scheduler and the players can use probabilistic strategies.

\mypar{Failure model.}
We set $t_0= \ceil{\frac{n}{3}}-1$ for the rest of this paper and
$k$ players out of $n$ can be rational while $t\leq t_0$ can
be Byzantine; the rest of the players are correct. \emph{Correct players} follow the
protocol: the expected utility of correct player $i$ is greater than
$0$ for any run in which the outcome satisfies consensus, and $0$ for
any other run. \emph{Rational players} can deviate to follow the strategy
that yields them the greatest expected utility at any time they are to
move, while \emph{Byzantine players} can deviate in any way, even not
replying at all (apart from notifying the \scheduler that they will
not move). Rational players have greater utility for outcomes in which
they caused a disagreement than from outcomes that satisfy consensus,
but have no interest in deviating from consensus for anything else, in
that they prefer to terminate and to guarantee validity. We will
detail further the utilities of rational players in
Section~\ref{sec:psync}.

We assume that if a coalition manages to cause a disagreement, then it
obtains a payoff of at most $\mathcal{G}$, which we call the
\textit{total gain}. Nevertheless, this total gain may be, for example,
the entire market value of the system. In a payment system application
in which players agree on a set of transactions to be decided, the total gain
$\mathcal{G}$ is exactly the sum of all the amounts spent in all transactions. We also
assume, w.l.o.g., that a coalition with $k$ rational players and $t$
Byzantine players split equally the total gain into $k$ parts, which
we call the \textit{gain} $g=\mathcal{G}/k$, that is, Byzantine
players are willing to give all the total gain from causing a
disagreement to the rational players that collude (to incentivize the
deviation for these rational players). Note that a protocol that
tolerates a maximum gain $\mathcal{G}$ equally split into $k$ parts also tolerates
any gain such that the maximum share of the split is $\mathcal{G}/k$, but we assume the equal split for ease of exposition.
We speak of the \textit{disagreeing}
strategy as the strategy in which players collude to produce a
disagreement, and of a coalition \textit{disagreeing} to refer to a
coalition that plays the disagreeing strategy. A disagreement of
consensus can mean two or more disjoint groups of non-deviating
players deciding two or more separate, conflicting
decisions~\cite{singh2009zeno}. For ease of exposition,
we consider in this work only disagreements into two values. Nonetheless, if the size of the coalition is less than half the total
number of players $k+t<n/2$ (as is the case for the work that we
present) then the coalition can only cause a disagreement into two
values~\cite{singh2009zeno}, whereas greater sizes of a coalition can
cause disagreements into multiple values~\cite{ranchal2020blockchain}.

We let rational players in a coalition and Byzantine players (in or
outside the coalition) know the types of all players, so that they
 know which players are the other Byzantine players,
rational players and correct players, while the rest of the players
only know the upper bounds on the number of rational and Byzantine
players, i.e., $k$ and $t$ respectively, and their own individual type
(that is, whether they are rational, Byzantine or correct).

\mypar{Cheap talks.}
As we are in a fully distributed system, without a trusted central
entity like a mediator, we assume \textit{cheap-talks}, that is,
private pairwise communication channels. We also assume negligible
communication cost through these channels. Non-Byzantine players are
also only interested in reaching consensus,
and not in the number of messages exchanged. Similarly, we
assume the cost of performing local computations (such as validating
proposals, or verifying signatures) to be negligible.

\mypar{Cryptography.}
We require the use of standard cryptography, for which we reuse the
assumptions of Goldreich et al.~\cite{goldreichplay}: polynomially
bounded players and the enhanced trapdoor permutations.
In practice, these two assumptions mean that players can sign
unforgeable messages, and that they can perform oblivious transfer.
Each player has a public key and a private key, and public keys are
common knowledge.

\mypar{Robustness.}
Given that a Nash equilibrium only protects against single-player
deviations, and our distributed system may be susceptible of a coalition
of $k$ rational and $t$ Byzantine players, it is important to consider tolerating multi-player
deviations.  
We thus restate Abraham's et al.~\cite{ADGH} definitions
of 
$t$-immunity,
$\epsilon$-$(k,t)$-robustness and the most recent definition of $k$-resilient
equilibrium~\cite{abraham2019implementing}.
The notion of $k$-resilience is motivated in distributed computing by
the need to tolerate a coalition of $k$ rational players that can all coordinate
actions. A joint strategy is $k$-resilient if not all rational members of a coalition of size at most $k$
can gain greater utility by deviating in a coordinated way.

\begin{defn}[$k$-resilient equilibrium]
  A joint strategy $\vv{\sigma}\in \mathcal{S}$ is a \textit{$k$-resilient equilibrium} (resp. \textit{strongly k-resilient equilibrium}) if, for all $K\subseteq N$ with $|K|\leq k$, all $\vv{\tau}_{K}\in \mathcal{S}_K$, all strategies $\sigma_s$ of the \scheduler, and for some (resp. all) $i\in K$ we have $u_i(\vv{\sigma}_K, \vv{\sigma}_{-K},\sigma_s) \geq u_i(\vv{\tau}_{K}, \vv{\sigma}_{-K},\sigma_s)$.
\end{defn}

The notion of $t$-immunity is motivated in distributed algorithms by
the need to tolerate $t$ Byzantine players. An equilibrium
$\vv{\sigma}$ is $t$-immune if non-Byzantine players still prefer to
follow $\vv{\sigma}$ despite the deviations of up to $t$ Byzantine
players.

\begin{defn}[$t$-immunity]
  A joint strategy $\vv{\sigma}\in \mathcal{S}$  is t-immune if, for all $T\subseteq N$ with $|T|\leq t$, all $\vv{\tau}\in \mathcal{S}_T$, all $i\not \in T$ and all strategies of the scheduler $\sigma_s$, we have $u_i(\vv{\sigma}_{-T}, \vv{\tau}_T,\sigma_s) \geq u_i(\vv{\sigma},\sigma_s)$.
\end{defn}


A joint strategy is an $\epsilon$-$(k,t)$-robust equilibrium if no coalition of $k$ rational players can coordinate to increase their expected utility by $\epsilon$
regardless of the arbitrary behavior of up to $t$ Byzantine players, even if the Byzantine players join their coalition.
We
illustrate it however with $\epsilon$ because of the use of
cryptography, that is, in order to account for the (negligible) probability
of the coalition breaking cryptography, as was done previously~\cite{ADGH}:

  \begin{defn}[$\epsilon$-$(k,t)$-robust equilibrium]
  A joint strategy $\vv{\sigma}\in \mathcal{S}$  is an $\epsilon$-$(k,t)$-robust (resp. strongly $\epsilon$-$(k,t)$-robust) equilibrium
  if for all $K, T\subseteq N$ such that $K\cap T = \emptyset, |K|\leq k,$ and $|T|\leq t$, for all $\vv{\tau}_T \in \mathcal{S}_T$, for all $\vv{\phi}_K\in\mathcal{S}_K$, for some (resp. all) 
  $i\in K$, and all strategies of the \scheduler $\sigma_s$, we have $u_i(\vv{\sigma}_{-T},\vv{\tau}_T,\sigma_s)\geq u_i(\vv{\sigma}_{N-(K\cup T)}, \vv{\phi}_K, \vv{\tau}_T,\sigma_s)-\epsilon$. We speak instead of a $(k,t)$-robust equilibrium if $\epsilon=0$.
\end{defn}
We use a recent definition of $k$-resilient
equilibrium~\cite{abraham2019implementing}, which slightly
differs from the definition of an $\epsilon$-$(k,t)$-robust equilibrium. We define here
strong resilience and strong robustness to refer to the stronger
versions of these properties~\cite{ADGH}. Byzantine fault tolerance in distributed computing is equivalent to our definition of $t$-immunity in game theory.


Given some game $\Gamma$ and desired functionality $\mathcal{F}$, we
say that a protocol $\vv{\sigma}$ is a $k$-resilient protocol for
$\mathcal{F}$ if $\vv{\sigma}$ implements $\mathcal{F}$ and is a $k$-resilient equilibrium.  
For example, if $\vv{\sigma}$ is a
k-resilient protocol for the consensus problem, then in all runs of
$\vv{\sigma}$, every non-deviating player terminates and agrees on the same
valid value. We extend this notation to $t$-immunity and
$\epsilon$-$(k,t)$-robustness. 
The required
functionality of this paper is thus reaching agreement.
  
\mypar{Punishment strategy.}
 We also restate the definition of a punishment strategy~\cite{ADGH} as a threat that correct and rational players can play in order to prevent other rational 
 players from
deviating.
The punishment strategy guarantees that if $k$ rational players deviate, then $t+1$ players can lower the utility of these rational players by playing the punishment strategy. 
  \begin{defn}[$(k, t)$-punishment strategy]  A joint strategy $\vv{\rho}$ is a $(k,t)$-punishment strategy with respect to $\vv{\sigma}$ if for all $K,T,P\subseteq N $ such that $K,T,P$ are disjoint, $|K|\leq k,|T|\leq t,|P|> t$, for all $\vv{\tau}\in \mathcal{S}_T$, for all $\vv{\phi}_K\in \mathcal{S}_K$, for all $i\in K$, and all strategies of the \scheduler $\sigma_s$, we have $u_i(\vv{\sigma}_{-T},\vv{\tau}_T,\sigma_s) > u_i(\vv{\sigma}_{N-(K\cup T \cup P)}, \vv{\phi}_K, \vv{\tau}_T, \vv{\rho}_P,\sigma_s)$.
  \end{defn}
  Intuitively, a punishment strategy represents a threat to prevent rational players from deviating, in that if they deviate, then players in $P$ can play the punishment strategy $\vv{\rho}$ and the deviating rational players decrease their utility with respect to following $\vv{\sigma}$. For example, crime sentences are an
effective punishment strategy against committing crimes. Not
terminating a protocol if just one player deviates can also be a
punishment strategy against deviating from the protocol.


\mypar{Accountability.}
 Previous work introduced signatures in consensus protocol messages,
guaranteeing that for a disagreement to occur, at least $t_0+1$ players
must sign conflicting messages, and once these messages are discovered
by a correct player, such player can prove the fraudsters to the rest of
correct players through \textit{Proofs-of-Fraud
(PoFs)}~\cite{civit2020brief, ranchal2020blockchain}. We also adapt to this model the property of accountability, recently defined for consensus~\cite{civit2020brief,CGG21}:

      \begin{defn}[accountability]
        Let $\vv{\sigma}$ be a protocol that implements
agreement. Suppose that a disagreement takes place, then $\vv{\sigma}$
is accountable if all correct players will eventually gather
enough proof that at least $t_0+1$ players deviated to
cause the disagreement.
  \end{defn}

 
\mypar{Rational agreement.}
In the remainder, we are interested in proposing a consensus protocol that is immune to up to $t_0$ Byzantine failures and robust to a coalition of up to $k$ rational and $t$ Byzantine players, so we restate the Byzantine consensus problem~\cite{LSP82} in the presence of rational players:
The \emph{Byzantine consensus problem} is, given $n$ players, each with an initial value, to ensure (i)~\emph{agreement} in that no two non-deviating players decide different values;
(ii)~\emph{validity} in that the decided value has to be proposed; and (iii)~\emph{termination} in that eventually every non-deviating player decides.

\begin{defn}[\Problem]
  Consider a system with $n$ players, a protocol $\vv{\sigma}$ solves the \problem problem if it implements consensus, and is $t_0$-immune and $\epsilon$-$(k,t)$-robust for some $k, t>0$ such that $n \leq 3(k+t)$.
\end{defn}

\section{Rational Agreement Impossibility without a Baiting Strategy}
\label{sec:imp}

In this section, we introduce a baiting strategy as a particular case
of punishment strategy and show that it is necessary to devise a
consensus protocol robust to a coalition of $k$ rational players and
$t$ Byzantine players.


  Our solution to agreement in the presence of rational and Byzantine
players, presented in Section~\ref{sec:psync}, consists of rewarding
rational players for betraying the coalition.  One may wonder whether
rewarding rational players in a coalition is the only way to obtain
$\epsilon$-$(k,t)$-robustness that tolerates coalitions of size
$n\leq 3(k+t)$ in partial synchrony. To demonstrate the need for a
reward, we first formalize a type of $(k,t)$-punishment
strategy, which we call a $(k,t,m)$-baiting strategy. A
$(k,t,m)$-baiting strategy is a
$(k-m,t)$-punishment strategy such that $k\geq
m > 0$, and these $m$ rational players prefer to actually play the
baiting strategy than to deviate with the rest of the players in the
coalition. That is, $m$ players of the coalition have to play the
baiting strategy for it to succeed, and at least $m$ rational players
in the coalition prefer to play the baiting strategy than to deviate
with the coalition. An example is offering a crime reduction for a criminal to cooperate with law enforcement into catching the criminal group to which it belongs.
  
  \begin{defn}[$(k, t, m)$-baiting strategy]\label{def:ekf-bs}
    A joint strategy $\vv{\eta}$ is a $(k,t,m)$-baiting strategy with respect to a strategy $\vv{\sigma}$ if $\vv{\eta}$ is a $(k-m,t)$-punishment strategy with respect to $\vv{\sigma}$, with $0< m \leq k$ and for all $K,T,P\subseteq N$ such that $K\cap T =\emptyset ,\, |P\cap K|\geq m,\, P\cap T = \emptyset$, $|K\backslash P|\leq k-m,|T|\leq t, |P|> t$, for all $\vv{\tau}\in \mathcal{S}_T$, all $\vv{\phi}_{K\backslash P}\in \mathcal{S}_{K\backslash P}-\{\vv{\sigma}_K\}$, all $\vv{\theta}_{P}\in \mathcal{S}_{P}$, all $i\in P$, and all strategies of the \scheduler $\sigma_s$, we have:
    \begin{equation*}
      \hspace{-1.9em}
      u_i(\vv{\sigma}_{N-(K\cup T \cup P)}, \vv{\phi}_{K\backslash P},\vv{\tau}_T, \vv{\eta}_P, \sigma_s)\geq   
      u_i(\vv{\sigma}_{N-(K\cup T \cup P)}, \vv{\phi}_{K\backslash P},\vv{\tau}_T,\vv{\theta}_{P},\sigma_s).
    \end{equation*}
    Additionally, we speak of a strong $(k,t,m)$-\textit{baiting strategy} in the particular case where for all rational coalitions $K \subseteq N$ such that $|K|\leq k$, $|K\cap P|\geq m$ and all $\vv{\phi}_{K\backslash P}\in \mathcal{S}_{K\backslash P}$ we have:
    $\sum_{i\in K}u_i(\vv{\sigma}_{N-(K\cup P)}, \vv{\phi}_{K\backslash P}, \vv{\eta}_P,\sigma_s) \leq \sum_{i\in K}u_i(\vv{\sigma}, \sigma_s).$
    \end{defn}

    A baiting strategy illustrates a situation where at least $m$ rational players in the coalition may be interested in baiting other $k+t-m$ rational and Byzantine players into a trap: the $k+t$ of them collude to deviate initially, just so that these $m$ players can prove such deviation by playing the baiting strategy, and get a reward for exposing this deviation. Such a strategy has a significant impact in a protocol to implement agreement. A strong baiting strategy defines a baiting strategy in which the fact that $m$ deviating players play the baiting strategy does not yield greater payoff to the entire coalition as a whole (if such coalition was made only by rationals), compared to following the protocol. This prevents a coalition of rational players from colluding together so as to play the baiting strategy on themselves only with the purpose of splitting the baiting reward among the colluding members. Notwithstanding, neither a baiting strategy nor a strong baiting strategy show that if these $m$ players play the baiting strategy, then the protocol implements the desired functionality. We illustrate the efficacy of baiting strategies to influence the outcome of a protocol in the example of the rational generals, shown in Figure~\ref{ex:ratgen}.
\begin{figure}[ht]
\noindent\fbox{%
    \parbox{\columnwidth-1em}{ \textbf{Rational generals example.} We
illustrate the intuition behind baiting strategies with an example
inspired from the Byzantine generals problem~\cite{LSP82}, that we refer to as the
`rational generals' problem: suppose $n=7$ Ottoman generals need to
agree on whether to attack or retreat. If all generals agree on
attacking, they will succeed, if they agree on retreat, they can
succeed another day. However, if only some of the generals attack,
they will lose. There are two Byzantine generals, i.e., $t=2$ , whose
goal is for the Ottomans to disagree on their decision for them to
lose, and another rational general, i.e., $k=1$, who has been offered a
bribe $\mathcal{G}$ in order to contribute to the disagreement, but
who is willing to betray the Byzantines for a greater income from
the Ottomans. Because of accountability, the generals will eventually
be able to track the disagreement to both the $t$ Byzantine and $k$ rational generals, but by then the
$k$ rational generals
will be enjoying its reward $\mathcal{G}$ in Constantinople, out of
reach.

\vspace{0.5em}\hspace{1em}
The generals suspect that there might be a bribed rational general ($k=1$). In an
attempt from them to make the rational general talk, they offer a
reward $\mathcal{R}>\mathcal{G}$ as a bounty for proving the fraud of
every other Byzantine and rational general, that is, if the rational general reveals
its identity and that of the $t$ Byzantine with proofs, then this rational general is spared and
rewarded with $\mathcal{R}$, while the $t$ Byzantine generals lose all of their capital (i.e., properties and
savings) that they own in the Ottoman empire. In this case, the rational general sees a greater incentive
to expose both himself and the Byzantine generals. This is an
example of a baiting strategy. Additionally, the Ottoman generals will pay
$\mathcal{R}$ with the capital taken from the $t$ Byzantine generals, so the Ottoman empire
will not even pay for the reward.

\vspace{0.5em}\hspace{1em}
Notice that Ottoman generals must guarantee to the rational general
that they will recognize him as the first to expose the coalition (and
the only rightful owner of the reward), so that the rational general
is not influenced by a threat from the Byzantine generals to steal the
reward if he betrays the coalition. That is, the rational general
will only bait the coalition if the protocol ensures that the
Byzantine generals will not be able to steal the reward from the
rational general after seeing that he betrayed the coalition. This is
in order to prevent the Byzantine generals from rushing to bait as soon as they learn the rational general is starting to bait, creating a situation
in which both Byzantine and rational generals seem to be legitimate
baiters of the coalition.

\vspace{0.5em}\hspace{1em}
In the extensive game, this means that the rational general must first
behave and make moves as if he would cause the disagreement. Then, the
rational general will only bait if he gets both enough evidence of the
fraud of the deviants and assurance that the Byzantine generals will
not outpace him and steal the reward.}
}
\caption{Rational generals example.}
\label{ex:ratgen}
\end{figure}


\mypar{Impossibility result.}
The reason why a $(k,t,m)$-baiting strategy is relevant to the consensus problem is that without such a strategy it is not possible to obtain a consensus protocol that is $(k,t)$-robust where $k>0$.
We show this result in Theorem~\ref{thm:imp}. The proof is similar to that of the impossibility of $t$-immune consensus under partial synchrony for $t>t_0$~\cite{DLS88}, since a partition of rational and Byzantine players can exploit two disjoint partitions of correct players to lead them to different decisions. Let us recall that we do not assume solution preference, and thus the payoffs from a disagreement can be singificantly greater than those of agreeing for rational players. For the proof of Theorem~\ref{thm:imp}, we first show the more general proof of Lemma~\ref{lem:imp}.
  \begin{lem}
    It is impossible to obtain a protocol $\vv{\sigma}$ that implements agreement, is $t_0$-immune and $(k,t)$-robust, $k\geq 0$ and $t=\max(t_0-k+1,0)$ unless there is a $(k,t,m)$-baiting strategy with respect to $\vv{\sigma}$, for $m>\frac{k+t-n}{2}+t_0$.
    \label{lem:imp}
  \end{lem}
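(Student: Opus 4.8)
The plan is to argue by contradiction, recasting the classical $n\le 3f$ impossibility of Byzantine consensus~\cite{LSP82,DLS88} as a statement about the extensive-form game. Assume $\vv{\sigma}$ implements agreement, is $t_0$-immune and $(k,t)$-robust with $t=\max(t_0-k+1,0)$ -- note this forces $k+t\ge t_0+1$ and $n\le 3(k+t)$ -- and suppose toward a contradiction that there is \emph{no} $(k,t,m)$-baiting strategy with respect to $\vv{\sigma}$ for any $m>\frac{k+t-n}{2}+t_0$.

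I would first build a coalition deviation that produces a disagreement. Fix proposable values $v_A\neq v_B$, a Byzantine set $T$ with $|T|=t$, a rational coalition $K$ with $|K|=k$, and split the $n-k-t$ correct players into two groups $A$ and $B$ chosen small enough to fit within the protocol's fault tolerance (this is where $n\le 3(k+t)$ and the choice of $t$ are used). Consider the run in which each $i\in A$ (resp. $i\in B$) proposes $v_A$ (resp. $v_B$); the coalition $K\cup T$ plays ``two-faced'', sending towards $A$ exactly the messages a correct player proposing $v_A$ would send and towards $B$ those of a correct player proposing $v_B$; and the scheduler delays all $A$--$B$ traffic past both groups' decision times, which is legal in partial synchrony since those times are finite. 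The crux is an indistinguishability step: the view of each $i\in A$ up to its decision lies in the same information set as in the all-correct run where everyone proposes $v_A$ and $B$'s messages are delayed identically, since $i$ never learns, directly or through the coalition, that $B$ proposed $v_B$. As $\vv{\sigma}$ implements agreement with validity, $A$ decides $v_A$ in the all-$v_A$ run, hence also in ours; the symmetric argument (here $t_0$-immunity guarantees termination of $B$ while $A$ is ``slow'') gives $B$ deciding $v_B$. So the deviation yields a disagreement, and every rational $i\in K$ receives gain $g=\mathcal{G}/k>0$, strictly more than $u_i(\vv{\sigma}_{-T},\vv{\tau}_T,\sigma_s)$, which by $t_0$-immunity still satisfies consensus. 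This contradicts $(k,t)$-robustness.

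The only way $(k,t)$-robustness can survive this is if, when the coalition attempts the attack, some rational members strictly prefer to defect: to switch to a strategy $\vv{\eta}$ that exposes the coalition using the $\ge t_0+1$ conflicting signatures it produced (the accountability machinery) and claims the reward, rather than see the disagreement through. By Definition~\ref{def:ekf-bs} such a strategy is exactly a $(k,t,m)$-baiting strategy, with the defectors playing the role of the set $P$. It remains to count the defectors: if only $m$ rational players defect while the remaining $k+t-m$ coalition members continue two-faced, then for both $A$ and $B$ to still reach conflicting decisions without ever hearing from a defector, each of $A$ and $B$ needs at least $n-t_0-(k+t-m)$ of its prompt participants to lie inside the coalition; with $|A|+|B|=n-k-t$ this forces $m\le t_0+\frac{k+t-n}{2}$. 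Hence only when $m>\frac{k+t-n}{2}+t_0$ can defection actually neutralize the attack and restore $(k,t)$-robustness -- and no such baiting strategy exists by assumption. So the deviation succeeds, a contradiction, proving the lemma.

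I expect the main obstacle to be the game-theoretic bookkeeping rather than the counting. Making the indistinguishability/partition argument airtight with the scheduler as a player requires checking that the coalition's two-faced behavior is a single information-set-consistent strategy and that every delay the scheduler imposes respects the partial-synchrony constraints; and, more delicately, the choice of $A$, $B$ (and any bystander correct players) must genuinely fit inside the $t_0$ fault budget under the hypothesis $t=\max(t_0-k+1,0)$, which is the one place the exact statement has to be handled carefully. Finally, turning ``$(k,t)$-robustness survives'' into the precise quantifier structure of Definition~\ref{def:ekf-bs} -- that $\vv{\eta}$ is simultaneously a best response for the $m$ defectors and a $(k-m,t)$-punishment strategy for the $k-m$ non-defectors -- is the other part that needs care; the bound $m>\frac{k+t-n}{2}+t_0$ then drops out of the quorum inequality above.
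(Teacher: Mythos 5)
Your proposal is correct and follows essentially the same route as the paper's proof: the same partition $N=K\cup A\cup B\cup F$ with a two-faced coalition and scheduler delays to force a disagreement, the same quorum count $|A|+k+t-m\ge n-t_0$ and $|B|+k+t-m\ge n-t_0$ summed against $|A|+|B|=n-k-t$ to obtain $m>\frac{k+t-n}{2}+t_0$, and the same concluding step that deterring the attack therefore requires a baiting strategy with at least that many defectors. You merely spell out the indistinguishability argument and the reduction to Definition~\ref{def:ekf-bs} in more detail than the paper, which simply asserts that the disagreement "is always possible."
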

  \begin{proof}
    We refer to Dwork et al.'s~\cite{DLS88} work for the impossibility of increasing $t> t_0$ and obtaining agreement (i.e., for $k=0$). For $k>0$ with $t\leq t_0$, assume the contrary: let $\vv{\sigma}$ be a protocol such that there is no $(k,t,m)$-baiting strategy with respect to $\vv{\sigma}$ and $\vv{\sigma}$ is $(k,t)$-robust, for $t=max(t_0-k+1,0),\, k>0$. Since the protocol is $t_0$-immune and it works under partial synchrony, the protocol must not require more than $n-t_0$ players participating in it in order to take a decision, or else the Byzantine players could prevent termination. Consider a partition of the network between 4 disjoint subsets $N=K\cup A \cup B \cup F$, where $K$ are the rational players (there is at least one), $F$ are the Byzantine players, i.e., $|F|+|K|=t+k\geq t_0+1$, and $A$ and $B$ are the rest of the players such that $|A|+|B|\leq n-t_0-1$ and both $|A|+|F|+|K|\geq n-t_0$ and $|B|+|F|+|K|\geq n-t_0$ hold (recall $t_0=\ceil{\frac{n}{3}}-1$). Let $\vv{\theta}$ be the strategy in which the rational players in $K$ deviate with Byzantine players in $F$ and achieve a disagreement between players in $A$ and players in $B$. If the players in $F$ and $K$ are all Byzantine and rational players, then
such a disagreement is always possible and the utility for each
rational player is, by definition of the model, greater than that of reaching
agreement. Notice also that since $t=\max(t_0-k+1,0)$, if $m>\frac{k+t-n}{2}+t_0$ rational players do not deviate to cause such disagreement, we have that at least one of $|A|+|F|+|K|-m< n-t_0$ and $|B|+|F|+|K|-m< n-t_0$ holds, or both: for this value of $m$ the deviants cannot cause a
disagreement. However, this is not true if instead $m\leq \frac{k+t-n}{2}+t_0$. It follows that it is necessary to encourage at least
$m>\frac{k+t-n}{2}+t_0$ rational players to not deviate into causing a disagreement, which means, by definition, that a $(k,t,m)$-baiting strategy is necessary.
\end{proof}
\begin{thm}
  It is impossible to obtain a protocol $\vv{\sigma}$ that implements rational agreement unless there is a $(k,t,m)$-baiting strategy with respect to $\vv{\sigma}$, for $m>\frac{k+t-n}{2}+t_0$.
  \label{thm:imp}
\end{thm}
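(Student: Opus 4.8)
The plan is to prove Theorem~\ref{thm:imp} as the \problem counterpart of Lemma~\ref{lem:imp}, running essentially the same partition-and-indistinguishability argument but with the \emph{full} Byzantine budget $t$ that the definition of \problem allows, rather than only the minimal value $\max(t_0-k+1,0)$ used in the lemma (this is what yields the larger threshold $\frac{k+t-n}{2}+t_0$). So suppose, for contradiction, that $\vv{\sigma}$ implements \problem for some $k,t>0$ with $n\le 3(k+t)$ (hence $\vv{\sigma}$ implements consensus, is $t_0$-immune and $\epsilon$-$(k,t)$-robust) and that there is no $(k,t,m)$-baiting strategy with respect to $\vv{\sigma}$ for any $m>\frac{k+t-n}{2}+t_0$.

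First I would extract the structural consequences of the hypotheses. From $n\le 3(k+t)$ and $t_0=\ceil{n/3}-1$ we get $k+t\ge\ceil{n/3}=t_0+1$, so the coalition $K\cup F$ with $|K|=k$, $|F|=t$ already has at least $t_0+1$ members, which by accountability/DLS is exactly the size above which a disagreement can be engineered. Since $\vv{\sigma}$ is $t_0$-immune and runs under partial synchrony, no decision may require the participation of more than $n-t_0$ players (else $t_0$ silent Byzantine players would block termination, as in \cite{DLS88}). I would then partition $N=K\cup F\cup A\cup B$, with $A,B$ an as-balanced-as-possible split of the $n-(k+t)$ remaining correct players; because $k+t\ge t_0+1$ one has $|A|,|B|\le t_0$ and $|A|+|B|\le n-t_0-1$, which is what the three-run construction needs.

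Next I would run the DLS-style construction. If only $m'$ members of the coalition refrain from the disagreeing strategy (behaving instead like baiters who withhold their moves), then in the reference run where $A$ together with the $k+t-m'$ active deviants are awake with common input $v_A$ and the set $B$ plus the $m'$ refrainers is crashed, validity and termination force $A$ to decide $v_A$ whenever $|B|+m'\le t_0$; symmetrically $B$ is forced to decide some $v_B\neq v_A$ when $|A|+m'\le t_0$; and the merge run in which the $k+t-m'$ deviants are two-faced then produces a genuine disagreement between the \emph{correct} sets $A$ and $B$. Solving $|A|+m'\le t_0$ and $|B|+m'\le t_0$ against $|A|+|B|=n-(k+t)$ shows this is possible exactly when $m'\le\frac{k+t-n}{2}+t_0$. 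Hence preventing the disagreement requires inducing strictly more than $\frac{k+t-n}{2}+t_0$ rational players not to deviate; but since by the failure model every rational strictly prefers the disagreement outcome — by a gain bounded below by a fixed positive constant (on the order of $\mathcal{G}/k$, and in particular exceeding the $\epsilon$ of the $\epsilon$-$(k,t)$-robustness claim) — the non-existence of a $(k,t,m)$-baiting strategy with $m>\frac{k+t-n}{2}+t_0$ means at most $\lfloor\frac{k+t-n}{2}+t_0\rfloor$ rationals refrain, the rest deviate with the $t$ Byzantine players, a disagreement ensues, and some $i\in K$ gains more than $\epsilon$ — contradicting $\epsilon$-$(k,t)$-robustness. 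This forces the existence of the claimed baiting strategy.

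I expect the main obstacle to be twofold. The first is keeping the counting honest: verifying that the balanced split of the $n-(k+t)$ correct players together with the $n-t_0$ participation bound really pins the critical number of refrainers at $\frac{k+t-n}{2}+t_0$ up to the rounding implicit in the floor, and dispatching the degenerate boundary $3\mid n$ with $k+t=t_0+1$ (where the threshold is negative, so any baiting strategy, having $m>0$, satisfies it). The second is bridging from the $\epsilon$-$(k,t)$-robustness used in the definition of \problem to the strict comparison the indistinguishability argument produces; this is handled exactly as in \cite{ADGH}, by noting that the disagreement gain is bounded below independently of $\epsilon$, so for every $\epsilon$ below that bound the robustness inequality is violated.
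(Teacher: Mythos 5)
Your proposal is correct, but it takes a different route from the paper. The paper's own proof of Theorem~\ref{thm:imp} is a two-line reduction: it observes that any protocol that is $(k,t)$-robust with $n\le 3(k+t)$ is in particular robust against the smaller coalition with Byzantine budget $\max(t_0-k+1,0)$, and then invokes Lemma~\ref{lem:imp} as a black box. You instead re-run the partition/indistinguishability argument of the lemma directly with the full Byzantine budget $t$ permitted by the \problem definition. The cost is that you duplicate the lemma's machinery; the benefit is that your argument actually delivers the threshold $m>\frac{k+t-n}{2}+t_0$ evaluated at the theorem's own $t$, whereas the paper's reduction, read literally, only certifies the (smaller, since the bound is increasing in $t$) threshold at $t=\max(t_0-k+1,0)$ and silently identifies the two. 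Your explicit bridging of the $\epsilon$-robustness inequality to the strict preference for disagreement (via the lower bound $\mathcal{G}/k$ on the gain) is also something the paper leaves implicit in both proofs. One caveat: your intermediate claim that $k+t\ge t_0+1$ forces $|A|,|B|\le t_0$ under a balanced split is false at the boundary (e.g.\ $n=3q$, $k+t=t_0+1$ gives $|A|=|B|=q>t_0=q-1$, and correspondingly the threshold $\frac{k+t-n}{2}+t_0$ is negative, so no disagreement is forceable at all); you correctly flag this degenerate case as needing separate dispatch, and the same boundary issue is present, unacknowledged, in the paper's own proof of Lemma~\ref{lem:imp}, whose three inequalities on $|A|,|B|$ are simultaneously satisfiable only when $k+t\ge n-2t_0$.
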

  \begin{proof}
    By definition, every $(k,t)$-robust protocol for $n \leq 3(k+t)$ must also be $(k,t)$-robust, for some $k\geq 0$ and $t=\max(t_0-k+1,0)$. Therefore it derives from Lemma~\ref{lem:imp}.
  \end{proof}

  Theorem~\ref{thm:imp} shows the need for a baiting strategy to solve
\problem. In Section~\ref{sec:bftcr} we show the
implementation of an additional phase to an accountable consensus
protocol in order to provide the functionality of a baiting
strategy. In Section~\ref{sec:psync} we illustrate the values of a
reward and deposit per player to make a strong baiting strategy that
at least $m$ rational players will play.


\section{\TRAP: Reaching Rational Agreement}
\label{sec:TRAP}
In this section, we present the \TRAP (\textbf{T}ackling \textbf{R}ational \textbf{A}greement through \textbf{P}ersuasion) protocol, the first protocol to solve the \problem problem. The \TRAP protocol comprises three components:
\begin{enumerate}
\item
A financial component, consisting of a deposit per player $\mathcal{L}$, taken at the start of the protocol from each participating player, and a reward $\mathcal{R}$, which is given to a player in the event that it provides PoFs for a disagreement on predecisions.
\item
An accountable consensus component, that pre-decides outputs from an accountable consensus protocol. 
\item
A baiting component, embodied in a novel Byzantine Fault Tolerant \textit{commit-reveal} (BFTCR) protocol that executes after the accountable consensus protocol. This component terminates either deciding one output (predecision) of the accountable consensus protocol, or resolving a disagreement on predecisions by rewarding one of the deviating players that exposed the disagreement and punishing the rest of deviating players.
\end{enumerate}
We first provide an overview of the properties that we aim at for the
\TRAP protocol in Section~\ref{sec:proc}, and the possible runs of
the game that derive from implementing a strong baiting strategy for
the \problem problem with the aforementioned components. We then
introduce and prove the correctness of the baiting component, the BFTCR protocol, in
Section~\ref{sec:bftcr}. Finally, we analyze the financial component,
that is, the specific values of reward and deposits, in
Section~\ref{sec:psync}. The accountable consensus component can be
any accountable consensus protocol~\cite{civit2020brief,CGG21,SWN21,CGG22}, and thus we treat this
component as a black box, for the sake of generality.
\subsection{Overview: consensus with a baiting strategy}
\label{sec:proc}
  
We proved in Section~\ref{sec:imp} that we need a baiting strategy for a protocol to solve the \problem problem.

Before we present the implementation of such a baiting strategy in
Section~\ref{sec:bftcr}, with additional configurations of the
required deposits and reward sizes in Section~\ref{sec:psync}, we
present in this section the basics of our baiting strategy. For this
purpose, we focus first on the properties that we aim at for such a
baiting strategy. Then, we showcase all the possible runs of a
protocol for consensus that provides such a strong baiting
strategy.

Given a protocol $\vv{\sigma}$ that implements accountable
consensus and is $t_0$-immune, we will extend it to implement the
\problem problem, in that we will prove the three following
properties:

\begin{itemize}[wide, labelwidth=!, labelindent=0pt]
\item \emph{\expandafter\MakeUppercase\rdominance}: There is a $(k,t,m)$-baiting strategy $\vv{\eta}$ with respect to $\vv{\sigma}$, for $m>\frac{k+t-n}{2}+t_0$.
\item \emph{\expandafter\MakeUppercase\baitingagreement}: $\vv{\eta}$ implements agreement. 
\item \emph{\expandafter\MakeUppercase\zeroloss}: $\vv{\eta}$ is a strong baiting strategy.
\end{itemize}

\expandafter\MakeUppercase\rdominance states the necessary condition
that a baiting strategy exists, while \baitingagreement guarantees
that playing such a baiting strategy still leads to
agreement. \expandafter\MakeUppercase\zeroloss guarantees that such a
baiting strategy is a strong baiting strategy. Coming back to the
rational generals example of Figure~\ref{ex:ratgen}, \rdominance states the existence of the
reward for the rational general, \baitingagreement guarantees that
generals will still decide whether to attack or retreat after paying
the reward to the rational general, and \zeroloss guarantees that only
the slashed capital of the Byzantine generals will be used to pay the
reward to the rational general.

\mypar{Reward for baiting.} Since the protocol is
accountable, we add a \textit{baiting reward} $\mathcal{R}$ for
player $i$ if $i$ can prove to the rest of the players that a coalition of at least $t_0+1$ players are trying to cause a disagreement, but before they succeed at causing the disagreement. If multiple players are eligible for the baiting reward, then only one is chosen at random to win the reward, and the rest are treated as fraudsters that did not bait. We select the winner at random in an additional \textit{winner consensus} in which the winner is decided from among the proposed candidates to win from correct replicas in this winner consensus. We explain further the winner consensus later in this section. Players can prove that a coalition is trying to cause a disagreement through PoFs which undeniably show two conflicting messages signed by the same set of players. The reward is only given to $i$ if $i$ exposes this coalition before the coalition causes the disagreement (i.e., before both partitions of correct players decide different decisions). 

\mypar{Funding the reward with deposits.} we require all players to
place a minimum \textit{deposit} $\mathcal{L}$. We also require such
deposit to be big enough so that the deposit taken from the exposed
coalition is enough to pay the reward, satisfying \zeroloss. Our goal is to set $\mathcal{R}$ and $\mathcal{L}$ so that we
implement a baiting strategy for a set $M$ of rational players in the
coalition, such that if others in the coalition bait, then for all
$i\in M$, player $i$ is better off also trying to bait and getting the
reward, while if the rest of the players in the coalition do not bait,
then if $i$ baits then $i$ gets the greatest expected utility that it
can in that information set. We analyze in Theorem~\ref{thm:usefulm}
the required values for such deposit and reward necessary to
incentivize at least $|M|=m>\frac{k+t-n}{2}+t_0$ rational players in a
coalition to follow a baiting strategy, depending on the size $k+t$ of
the coalition and on the maximum total gain from disagreeing
$\mathcal{G}$. For now, however, let us ignore the values of
$\mathcal{L}$ and $\mathcal{R}$ and focus on the protocol that solves
the \problem problem, by assuming that these values of $\mathcal{L}$
and $\mathcal{R}$ are enough to make $m>\frac{k+t-n}{2}+t_0$ rational
players bait the coalition, instead of terminating a disagreement. We will come back to specify proper values for
$\mathcal{L}$ and $\mathcal{R}$ in Section~\ref{sec:psync}.  If these
PoFs expose at least $t_0+1$ players including the winner of the
baiting reward $\mathcal{R}$, then the $t_0$ (or more) remaining
colluding players lose the deposit amount $\mathcal{L}$.

\mypar{Dominating disagreements.} We explore here the possible runs, assuming that we already have such
a baiting strategy, and what each of these runs means for the payoffs
of a rational player $i$:
    \begin{enumerate}[leftmargin=* ,wide=\parindent]
      \item \label{str:cor}Rational players including $i$ contribute to reaching agreement and follow
the protocol $\vv{\sigma}$, getting some utility
$u_i(\vv{\sigma}_{-T},\vv{\tau}_T)\geq \epsilon$ where $\epsilon>0$.
      \item \label{str:dis}Some rational players collude with $i$ and deviate to disagree, playing
strategy $\vv{\phi}$ with some Byzantine players $T$ and other rational players $K$
such that $|K\cup T|\geq n/3$, $K\cap T=\emptyset$, obtaining utility $u_i(\vv{\sigma}_{N-K-T},\vv{\phi}_{K\cup T})=g$.

      \item \label{str:bai} Player $i$ deviates to bait other rational players
into colluding with some Byzantine players such that $|K\cup T|\geq n/3$,
$K\cap T=\emptyset$, and this deviation consists of playing strategy $\vv{\eta}$ to expose the colluding players via
PoFs and obtain the baiting reward. As a result, player $i$ obtains utility $u_i(\vv{\sigma}_{N-K-T},\vv{\phi}_{K\cup T - M},\vv{\eta}_{M})=
p(m)\mathcal{R}-q(m)\mathcal{L}$, where $M$ is the set of players of the coalition that bait, i.e., $i\in M$, with $|M|=m$. $p(m)=1/m$ represents the probability of winning the reward, while $q(m)=1-p(m)=(m-1)/m$ the probability of not winning it after baiting.
      \item \label{str:suf} Player $i$ deviates to disagree only to
suffer a trap baited by another rational (or group of rational players), obtaining utility
$u_i(\vv{\sigma}_{N-K-T},\vv{\phi}_{K\cup T
  -M},\vv{\eta}_{M})\leq -\mathcal{L}$.
      \item \label{str:ben}In any run where the protocol does not terminate, player $i$ obtains negative utility.

      \item \label{str:ter}Player $i$ contributes to reaching agreement but a coalition
        causes a disagreement. In this case, $i$ is one of the victims of a disagreement (for example, a double-spending). Hence, $i$ obtains negative utility. 
\end{enumerate}

Notice that the runs~\ref{str:suf},~\ref{str:ben} and~\ref{str:ter} are strictly dominated by run~\ref{str:cor} (following the protocol). Our goal is to make runs represented by~\ref{str:bai} runs that also implement agreement and that strictly dominate runs represented by~\ref{str:dis}.

\subsection{Baiting component: the BFTCR protocol}
\label{sec:bftcr}
In this section, we present the first implementation of a baiting
strategy for \problem. As such, we extend an accountable consensus
protocol with a Byzantine Fault Tolerant \textit{commit-reveal}
(BFTCR) phase in order to solve consensus even if there is a
disagreement at consensus level, if at least $m$ rational players
decide to betray the coalition in exchange for trying to win a reward.
We show in Algorithm~\ref{alg:phase} the BFTCR phase. As such, we speak of a
\textit{predecision} for a decision of the accountable consensus
protocol, whereas a \emph{decision} now refers to the outcome of the BFTCR
protocol. The BFTCR phase consists of 5 main parts:
\begin{enumerate}
\item
A reliable broadcast, in which players share their encrypted commitment (line~\ref{lin:sta}),
\item
a second reliable broadcast, in which players share the first $(n-t_0)$ encrypted commitments that they delivered in the first reliable broadcast (line~\ref{line:broadcast}),
\item
a regular broadcast, in which players share the key to reveal their commitment (line~\ref{line:bc1}),
\item
an additional consensus to select the winner of the reward, if some players reveal a list of PoFs (line~\ref{line:select}), and 
\item
a slashing of the deposits from the fraudsters, payment of the reward to the winner and resolution of the disagreement on predecisions (line~\ref{line:punish})
. 
\end{enumerate}

\mypar{Commit and reveal.} The purpose of the first group of reliable broadcasts is to reliably broadcast the encrypted PoFs, should a player own them, or an encrypted hash of a predecision otherwise. We say that the \textit{commitment} is the encrypted content that each player decides to broadcast in this first reliable broadcast. In line~\ref{line:broadcast} each player $i$ then starts the second reliable broadcast by broadcasting a list of the first $(n-t_0)$ delivered commitments that $i$ delivered in the first reliable broadcast. The purpose of the calls to broadcast on lines~\ref{line:bc1} and~\ref{line:bc2} is to deliver the keys to decrypt the encrypted messages. A player $i$ thus \textit{reveals} his commitment by broadcasting the key. A player $i$ decrypts the commitment of player $j$ in line~\ref{line:decrypt}. Then, player $i$ adds this decrypted message to the list of decided hashes in lines~\ref{line:deliver1} to~\ref{line:deliver}, or to the list of PoFs received in lines~\ref{lin:pofr1} to~\ref{lin:pofr2}.


\mypar{Termination.} The BFTCR phase of the \TRAP protocol terminates in one of two ways:
\begin{itemize}[leftmargin=* ,wide=\parindent]
\item 
either there is no disagreement on predecisions, and then the protocol terminates when at least $(n-t_0)$
messages are decrypted with the same hash of the predecisions in
line~\ref{line:deliver}; 
\item
or some players reveal a disagreement on predecisions
through PoFs, and then the protocol terminates when at least $t_0+1$
messages are decrypted (without counting players that are proven to be
false through a PoF) with a reward to a chosen baiter and a punishment
to the remaining players that are listed in the PoFs from
lines~\ref{lin:pofrec} to~\ref{line:resolve}.
\end{itemize}

Note that accountability does not guarantee that a baiter will gather
enough PoFs before a disagreement takes place. We prove that baiters
will gather enough PoFs before a disagreement takes place as part of the proof of
Theorem~\ref{thm:nhalf}. The idea is that $m$ rational players will wait to receive enough PoFs to be able to commit to bait, where $m$ is big enough to prevent termination
of either of the partitions of correct players.

\mypar{Valid candidates of the winner consensus.} We define a \textit{valid candidate} to win the reward as a member of a deviating coalition that committed to bait the coalition (by sending a commitment to a list of PoFs of the coalition in line~\ref{lin:sta}) independently of whether other $m$ players of the coalition also committed to bait, for $m>\frac{k+t-n}{2}+t_0$. The objective of the BFTCR protocol is to distinguish valid candidates from players who try to win the reward only after they learn that the disagreement will not succeed. A correct player $i$ considers a baiter $j$ as a valid candidate if $i$ can see $j$'s commitment to bait in at least $t_0+1$ messages from the second reliable broadcast. We refer to this $t_0+1$ messages as a \textit{proof-of-baiting} (PoB). The BFTCR protocol selects the winner of the bait among the list of valid candidates by executing an additional consensus, in the call to $\lit{select\_winner}$ in line~\ref{line:select}, in which all participating players propose the PoFs they know about and the valid candidates, along with the PoBs. We detail further this call later in this section.

Note that a rational player $i$ that commits to bait a coalition may deviate from Algorithm~\ref{alg:phase} in order to hinder other deviants from becoming valid candidates after $i$ reveals its commitment. This is because this way $i$ maximizes its chances of winning the reward (by minimizing the number of valid candidates for the reward). This is an expected deviation of a baiting rational player, which consists on waiting to deliver as many messages from the second reliable broadcast as possible from both partitions of correct players that suffered the disagreement on predecisions, and we show the correctness of this approach as part of the proof of Theorem~\ref{thm:usefulm}.

\begin{algorithm}[ht]
  \caption{BFT commit-reveal protocol for (correct) player $i$}
  \label{alg:phase}
  \smallsize{
    \begin{algorithmic}[1]
      \footnotesize{
        \Part{{\bf State}}{
          \State $\ms{enc\_msgs}$, list of delivered encrypted messages from the first group reliable 
           \State \T\T broadcasts, initially $\emptyset$
          \State $\ms{list\_enc\_msgs}$, list of delivered encrypted messages by other players from the 
           \State \T\T first group of reliable broadcasts, initially $\emptyset$
          \State $\ms{decrypted\_msgs}$, list of delivered decrypted messages from the first group of
           \State \T\T reliable broadcasts, initially $\emptyset$
          \State $\{\ms{RB}^1_j\}_{j=0}^n$, the first group of reliable broadcasts where $j$ is the source
          \State $\{\ms{RB}^2_j\}_{j=0}^n$, the second group of reliable broadcasts where $j$ is the source
          \State $\ms{hashes}$, a dictionary where keys are hashes and values are integers, initially it
           \State \T\T 
          does not contain any key or value 
          \State $\ms{local\_hash}$, local hash of the predecided value, according to this player
          \State $\ms{POF\_received}$, boolean, initially \textbf{False}
          \State $i,\,i\_\ms{msg},\,i\_\ms{key},i\_\ms{enc\_msg}$, player's id, message, key, and encrypted message
        }\EndPart

        \vspace{-0.3em}
        \Statex \rule{0.45\textwidth}{0.4pt}
        
        \State $\ms{RB}_i^1.\lit{start(\ms{i\_enc\_msg})}$\label{lin:sta}\Comment{start first group of reliable broadcasts}
        
        \Statex
        
        \Part{\textbf{Upon RB-delivering} $\ms{enc\_msg}$ \textbf{from reliable broadcast} $\ms{RB}_j^1$}{\label{lin:rb21}
          \State $\ms{enc\_msgs}[j]\gets \ms{enc\_msg}s$
          \SmallIf{$\lit{size(}\ms{enc\_msgs}\lit{)}\geq n-t_0$}{}
          \State $\ms{RB}_i^2\lit{.start(}\ms{enc\_msgs}\lit{)}$
        \label{line:broadcast}
          \EndSmallIf
        }\EndPart
        
        \Statex
        
        \Part{\textbf{Upon RB-delivering} $\ms{enc\_msgs_j}$ \textbf{from reliable broadcast} $RB^2_j$}{
          \State $\ms{list\_enc\_msgs}[j]\gets \ms{enc\_msgs_j}$
          \SmallIf{$\lit{size(}\ms{list\_enc\_msgs}\lit{)}\geq n-t_0$ \textbf{and} $\lit{size(}\ms{enc\_msgs}\lit{)}\geq n-t_0$}{}
          \State $\lit{broadcast(}\ms{i\_key,i}\lit{)}$\label{line:bc1}\Comment{reveal $i$'s commitment by broadcasting decryption key}
          \EndSmallIf
        }\EndPart
        
        \Statex
        
        \Part{\textbf{Upon delivering} $\ms{key}$ \textbf{from} $j$ \textbf{and} RB-delivering from $\ms{RB}_j^1$ and $\ms{RB}_j^2$  }{
          \State $\lit{broadcast(}\ms{key,j}\lit{)}$\label{line:bc2}
          \State $\ms{decrypted\_msgs[j]}\gets \lit{decrypt(}\ms{enc\_msgs},\ms{key}\lit{)}$\label{line:decrypt}\Comment{decrypt it}
          \SmallIf{$\ms{decrypted\_msgs[j]}\lit{.type}= \lit{HASH}$}\Comment{if it is the hash of a predecision}
          \State $\ms{hash}\gets \ms{decrypted\_msgs[j]}.\lit{get\_hash()}$\label{line:deliver1}
          \State $\ms{hashes}[\ms{hash}] \,\text{+=}\, 1$\Comment{add to count}
          \SmallIf{$\ms{hashes}[\ms{hash}]\geq n-t_0$ \textbf{and} $\ms{local\_hash} = \ms{hashes}[\ms{hash}]$}{}
          \State $\lit{decide(}\ms{hash}\lit{)}}$\label{line:deliver}\Comment{if count for this hash reaches threshold, then decide it}
        \EndSmallIf
        \EndSmallIf
        \SmallElseIf{$\ms{decrypted\_msgs[j]}\lit{.type}= \lit{POFS}$}{}\Comment{if instead list of PoFs}
        \State $\ms{PoFs}\gets \ms{decrypted\_msgs[j]}.\lit{get\_PoFs()}$\label{lin:pofr1}
        \SmallIf{$\lit{verify(}\ms{PoFs}\lit{)}$}{$\ms{list\_PoFs[j]}\gets \ms{PoFs}$}\Comment{verify PoFs are valid}
        \State
        $\ms{POF\_received}\gets \textbf{True}$\label{lin:pofr2}
        \EndSmallIf
        \EndSmallElseIf
        \SmallIf{$\ms{POF\_received}$}{}\label{lin:pofrec}
        \State $\ms{msgs\_filtered}\gets \lit{keys(}\ms{decrypted\_msgs}\lit{)}\setminus \lit{keys(}\ms{PoFs}\lit{)}$
        \SmallIf{$\lit{size(}\ms{msgs\_filtered}\lit{)}\geq t_0+1$}{}\Comment{winner consensus}
        \State $\ms{baiter,\, frauds,\, predec_1\, predec_2}\gets \lit{select\_winner(}\ms{list\_enc\_msgs},\ms{lPoFs}\lit{)}$\label{line:select} 
        \State $\lit{punish(}\ms{frauds}\lit{)}$\label{line:punish};
        $\lit{reward(}\ms{baiter}\lit{)}$\label{line:reward};
        $\lit{resolve(}\ms{predec_1},\,\ms{predec_2}\lit{)}$\label{line:resolve}
        \EndSmallIf
        \EndSmallIf
        \EndPart
      }
    \end{algorithmic}
  }
\end{algorithm}
\mypar{Correctness and randomness of the winner consensus.} We show in Theorem~\ref{thm:claim} that no deviating player can win the reward without being a valid candidate, i.e., no player can bait and win the reward after learning that other $m$ (or more) players baited. Additionally, note that the winner consensus solves consensus for $n>9/5(k+t)$ because at least $t_0+1$ provably fraudulent players of the coalition will not participate in it, as has already been shown~\cite{ranchal2020blockchain}
, and we consider $n>2(k+t)$. That is, at most $n'=n-(t_0+1)<2n/3$ players participate in the winner consensus. Since the maximum coalition size is $k+t<n/2$, then the remaining players of the coalition that could participate in the winner consensus are $t'=n/2-(t_0+1)<n/6$, and thus $t'<n'/3$ and the winner consensus solves consensus. Furthermore, the winner consensus only terminates once at least $n-t'$ proposals have been decided, which can be optimized through a democratic consensus protocol~\cite{dbft,civit2020brief,CGG21,ranchal2020blockchain}. Finally, after $n-t'$ proposals are decided upon, the participants execute an iteration of a random beacon that tolerates $t'<n'/3$ Byzantine faults~\cite{randshare,SPURT,HAVSS}, in order to select the winner of the baiting reward randomly from among any of the valid candidates that were in any of the decided proposals.

Following the winner consensus, in line~\ref{line:punish},
fraudsters are punished and the baiter is rewarded,
respectively. The call to $\lit{resolve(...)}$ resolves the
two disagreeing predecisions by deterministically choosing one of them
(i.e., lexicographical order) or, depending on the application, merging both.
          

\mypar{Resolving a disagreement on consensus predecisions with BFTCR.}
It is clear that if there is no disagreement on the
predecisions, the BFTCR phase will terminate and satisfy consensus. We
consider here the output of the BFTCR phase in the case where there is
a disagreement into two predecisions. We speak of a disagreement on predecisions being \textit{finalized} if it becomes a disagreement on decisions (that is, on the output of the BFTCR phase). We will show in Theorem~\ref{thm:nhalf} that if
$m>\frac{k+t-n}{2}+t_0$ rational players commit to bait instead of
finalizing the disagreement on predecisions, then the \TRAP protocol still
satisfies consensus. For this purpose, we define
$m(k,t)=\floor{\frac{k+t-n}{2}+t_0}+1$ (i.e., the smallest natural
value that satisfies $m>\frac{k+t-n}{2}+t_0$). Then, we first show in
Lemma~\ref{thm:unavoidable} that if $m(k,t)$ rational players bait,
then the only possible outcome is to resolve a disagreement on
predecisions.

\begin{lem}
Let $n$ players play the associated game of the \TRAP
protocol $\vv{\sigma}$, out of which $k$ can be rational and $t$ Byzantine,
with $n>2(k+t)$. Suppose a run in which a
coalition causes a disagreement on predecisions, and consider the
start of the BFTCR phase. Then, if $m(k,t)$ rational players of the
coalition commit to bait then the only possible outcome is to pay the
reward and resolve the disagreement on predecisions.
\label{thm:unavoidable}
\end{lem}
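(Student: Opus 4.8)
The BFTCR phase of Algorithm~\ref{alg:phase} can halt only if (i)~some correct player reaches the threshold of line~\ref{line:deliver} and decides a hash of a predecision (``finalizing'' that predecision), (ii)~the PoF branch of lines~\ref{lin:pofrec}--\ref{line:resolve} is executed, rewarding a baiter and resolving the disagreement, or (iii)~it never halts. The plan is to show that, once the $m(k,t)$ baiters have committed to bait, (i) is impossible for both predecisions and (iii) does not occur, which leaves only (ii) --- exactly the claimed outcome.

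The heart is ruling out (i), by a counting argument that does not depend on message timing. Fix the run; let $A$ and $B$ be the disjoint sets of non-coalition players that predecided the conflicting values $v_A\neq v_B$, let $C$ with $|C|\leq k+t$ be the deviating coalition, and let $M\subseteq C$ with $|M|=m(k,t)$ be the baiters. By hypothesis each baiter starts its first reliable broadcast (line~\ref{lin:sta}) with an encrypted \emph{list of PoFs}, never with an encrypted hash of a predecision; and a correct player only ever commits the hash of \emph{its own} predecision (or a PoF list it already owns). Hence the number of delivered commitments that decrypt to $\mathrm{hash}(v_A)$ is at most $|A|+|C\setminus M|\leq |A|+(k+t)-m(k,t)$ (no player of $B$ contributes), and symmetrically for $\mathrm{hash}(v_B)$. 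I would then bound $|A|$ and $|B|$: since the accountable component predecided $v_B$, a quorum of at least $n-t_0$ players ``supported'' $v_B$, and such a quorum lies inside $B\cup C$, so $|B|\geq n-t_0-(k+t)$; together with $|A|+|B|=n-|C|$ this gives $|A|\leq t_0$, and likewise $|B|\leq t_0$. A short arithmetic check --- using $m(k,t)>\frac{k+t-n}{2}+t_0$ together with $n>2(k+t)$ and $t_0=\lceil n/3\rceil-1$ --- then yields $|A|+(k+t)-m(k,t)<n-t_0$ and the symmetric inequality; since also $\max(|A|,|B|)\leq t_0<n-t_0$, for every way the members of $C\setminus M$ split their hash commitments between the two sides both counters $\mathrm{hashes}[\mathrm{hash}(v_A)]$ and $\mathrm{hashes}[\mathrm{hash}(v_B)]$ remain strictly below $n-t_0$ forever, so the test of line~\ref{line:deliver} never succeeds and no predecision is finalized. (The threshold $n-t_0>n/2$ already forces at most one partition to be finalizable even with no baiters; the role of the baiters is to bring ``at most one'' down to ``none'', and $m(k,t)=\lfloor\frac{k+t-n}{2}+t_0\rfloor+1$ is the least count for which the inequalities close.)

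It remains to show the phase does halt, necessarily through the PoF branch. Having committed to bait, each baiter holds a valid PoF list, and it reveals its key --- this is prescribed by the baiting strategy $\vv{\eta}$, and in any case a player already committed to a PoF list strictly prefers to reveal it, since otherwise it wins nothing yet is still named and punished by the other players' PoFs. Then reliable broadcast delivers the baiters' commitments and keys to every correct player, which decrypts a valid PoF list, sets its PoF-received flag, and --- the $n-k-t>t_0$ correct players being named in no PoF, all predeciding one of $v_A,v_B$, and all revealing their keys --- has more than $t_0$ entries in its filtered-message list and so invokes the winner consensus of line~\ref{line:select}. That consensus terminates (it runs among $n'=n-(t_0+1)<2n/3$ players with fewer than $n'/3$ faulty), after which lines~\ref{line:punish}--\ref{line:resolve} pay the reward and resolve the two predecisions. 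Combining the three parts gives the lemma. The main obstacle is the chicken-and-egg worry noted just before the statement --- accountability supplies PoFs only eventually, not necessarily before a disagreement is finalized; the present argument dodges it because the statement already grants the commitment, and, more importantly, because the counting argument of the second paragraph forbids \emph{any} finalization at \emph{any} time while that commitment is pending, so no race can be lost. (This timing-free ``no finalization'' property is exactly the ingredient Theorem~\ref{thm:nhalf} will invoke to justify that the baiters really can commit in time.) The one delicate point is the arithmetic: the strict inequality must be arranged to hold for every admissible split of $C\setminus M$'s commitments between $v_A$ and $v_B$, and it is that requirement that fixes the value of $m(k,t)$.
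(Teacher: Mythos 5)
Your overall decomposition (rule out finalization of a predecision, then show the phase still terminates through the PoF branch) matches the paper's intent, but the load-bearing step is different and, as written, does not hold. The paper only establishes the disjunctive claim of Lemma~\ref{lem:imp}: with $m(k,t)$ baiters withheld, \emph{not both} partitions can assemble $n-t_0$ matching commitments, because summing the two quorum requirements against $|A|+|B|\leq n-(k+t)$ forces $m\leq\frac{k+t-n}{2}+t_0$. You instead claim the stronger conjunctive statement that \emph{neither} counter ever reaches $n-t_0$, via $|A|,|B|\leq t_0$ and the inequality $t_0+(k+t)-m(k,t)<n-t_0$. That inequality is equivalent to $m(k,t)>2t_0+(k+t)-n$, which is not implied by $m(k,t)>\frac{k+t-n}{2}+t_0$ once $t_0>\frac{n-(k+t)}{2}$. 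Concretely, take $n=13$, $k=6$, $t=0$: then $t_0=4$, $m(k,t)=1$, and the split $|A|=4$, $|B|=3$ is admissible; partition $A$'s counter can reach $|A|+(k+t)-m(k,t)=9=n-t_0$, so the test of line~\ref{line:deliver} can succeed on $A$'s side. Your parenthetical that the baiters bring ``at most one'' down to ``none'' is exactly the point at which the argument breaks.

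Because of this, your proof is missing the ingredient the paper supplies in its third paragraph: an argument that paying the reward and finalizing a predecision are mutually exclusive outcomes, i.e., that if some correct player terminates the winner consensus then every correct player must execute it before it could ever gather a finalization quorum. The paper gets this from a quorum-intersection count --- once the winner consensus terminates for someone, at most $(n-|F|)/3$ correct players remain uninvolved, and $(n-|F|)/3+k+t<n-t_0$ precisely when $k+t<n/2$ --- which is where the hypothesis $n>2(k+t)$ is actually consumed. (Your second half, on termination through lines~\ref{lin:pofrec}--\ref{line:resolve} and the liveness of the winner consensus, is essentially the paper's second step and is fine.) To repair your write-up you would need either to add that mutual-exclusion argument, or to find a different reason why the one partition that \emph{could} reach the threshold is prevented from doing so before the baiters' PoFs are revealed.
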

\begin{proof}
  First, we show that $m(k,t)$ deviating players committing to bait suffices to
prevent the disagreement on predecisions to be finalized in a
disagreement on decisions. This is analogous to the proof of
Lemma~\ref{lem:imp}. Then, we show that if $m(k,t)$ players commit to
bait, then the BFTCR phase safely terminates resolving predecisions,
with all correct players that start the winner consensus terminating
it and agreeing. Finally, we show that deviating players cannot get
the reward and also cause a disagreement, i.e., if one player
terminates the winner consensus then all correct players start it.

Suppose two predecisions $v_A,\,v_B$ that two partitions of players
not in the coalition $A$ and $B$ predecided, such that $A\cap
B=\emptyset$, and $|A|+|B|+k+t\leq n$. For $A$ to decide $v_A$
(resp. $B$ to decide $v_B$), players in $A$ (resp. $B$) must be able
to decide without hearing from players in $B$ (resp. $A$). Therefore,
$|A|+k+t\geq n-t_0$ and also $|B|+k+t\geq n-t_0$ to finalize the
disagreement. We consider now how many $m$ rational players out of $k$
must bait (i.e., must not contribute to finalizing the disagreement)
for a disagreement to necessarily fail. This value must be such that
$|A|+(k-m)+t<n-t_0$ and same for $B$'s partition, which solves to $m >
\frac{k+t-n}{2}+t_0$ (analogously to Lemma~\ref{lem:imp}).

Then, we recall that the BFTCR phase resolves predecisions, rewards
and punishes players if at least $t_0+1$ players have been exposed
through PoFs. Thus, every non-deviating player can ignore messages
received from a set containing at least $t_0+1$ players. All non-deviating players eventually converge to the same set of detected fraudsters~\cite{ranchal2020blockchain}, as all correct players broadcast the PoFs they hear from and update their detected fraudsters accordingly. As such,
 let $F$ represent the set of detected fraudsters, then for all
 $|F|\in[t_0+1,k+t]$ it follows that $n'/3> k+t-|F|$ for $n'=n-|F|$, and thus
the winner consensus tolerates deviations from the rest of rational
and Byzantine players not yet detected.

Finally, we show that if the reward is paid, then it is not possible
to cause a disagreement at decision level. We have shown in the
previous paragraph that all non-deviating players that execute the
winner consensus terminate agreeing. We must thus prove that if a
correct player terminates the winner consensus, then no
correct player can terminate deciding a predecision without
executing the winner consensus. Since $n'/3> k+t-|F|$, the winner
consensus terminates with the participation of just $2n'/3$ players,
of which at least $2n'/3 - (k+t-|F|)$ are correct. Since there are
$n-k-t$ correct players in total, if the winner consensus terminates
for some correct player, then there are at most $c=n-k-t-(2n'/3 -
(k+t-|F|))=(n-|F|)/3$ correct players that have neither learned about the
disagreement nor executed the winner consensus yet. Thus, for these
remaining correct players to not be able to decide without executing the
winner consensus, it is necessary that $c+t+k<n-t_0\iff t+k<n/2$.

Hence, as long as at least $m(k,t)=\floor{\frac{k+t-n}{2}+t_0}+1$
rational players play the baiting strategy, the only possible outcome
is for one of them to get the reward, and to resolve the disagreement
on predecisions.


\end{proof}
\begin{thm}[\baitingagreement]
  Let $n$ players play the associated game of the \TRAP protocol
$\vv{\sigma}$, of which $k$ can be rational and $t$ Byzantine, with $n>2(k+t)$. Suppose that
$m(k,t)=\floor{\frac{k+t-n}{2}+t_0}+1$ rational players in the coalition
play the baiting strategy committing to bait if they participate in a
disagreement on predecisions. 
  Then the \TRAP protocol solves \problem.
  \label{thm:nhalf}
\end{thm}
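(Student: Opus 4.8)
The plan is to verify, under the stated hypothesis, the three requirements in the definition of \Problem: that \TRAP implements consensus, that it is $t_0$-immune, and that it is $\epsilon$-$(k,t)$-robust. The engine of the argument is Lemma~\ref{thm:unavoidable}, which already says that whenever a coalition causes a disagreement on predecisions and $m(k,t)$ of its rational members commit to bait, the BFTCR phase necessarily terminates by running the winner consensus, rewarding one baiter, punishing the remaining $t+k-1$ deviants, and resolving the two predecisions into a single decision; the rest of the proof assembles this with properties inherited from the accountable consensus component and with the deposit/reward sizing of Section~\ref{sec:psync}.

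First I would check that \TRAP implements consensus. \emph{Termination}: the accountable consensus component is live under partial synchrony for $t\le t_0$, and the reliable broadcasts and key-revealing broadcasts of the BFTCR phase stay live because only the $t\le t_0<n/3$ Byzantine players would stall them, a non-terminating run being strictly dominated for a rational player; then the phase ends along one of its two branches --- if no disagreement on predecisions occurred, at least $n-t_0$ revealed commitments carry the same predecision hash and $\lit{decide}$ fires, and if one occurred, Lemma~\ref{thm:unavoidable} gives termination of the winner consensus for every correct player, after which $\lit{resolve}$, $\lit{reward}$ and $\lit{punish}$ execute. \emph{Agreement}: in the first branch there is a unique predecided value among non-deviating players, so all decide it; in the second, Lemma~\ref{thm:unavoidable} guarantees that every correct player runs the winner consensus and agrees on the same winner and the same ordered pair of predecisions, and $\lit{resolve}$ is deterministic. \emph{Validity}: predecisions are proposed values by validity of the accountable consensus, and $\lit{resolve}$ returns one of the at most two predecisions (or a fixed merge of them).

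Next I would establish $t_0$-immunity. The accountable consensus component is $t_0$-immune by assumption, and the BFTCR phase cannot be turned by up to $t_0$ Byzantine players against a non-Byzantine one: the only deviation the \scheduler lets them use to stall anything is the default move of withholding messages, which is harmless while $t_0<n/3$ keeps every reliable broadcast and the winner consensus live; and by unforgeability of signatures they can neither fabricate a PoF against a correct player nor manufacture a valid PoB, so a Byzantine ``baiter'' that acts only after the fact is never accepted as a valid candidate, since a PoB requires the commitment to appear in $t_0+1$ messages of the second reliable broadcast. Hence consensus still holds for the $n-t$ non-Byzantine players; and since any reward paid is funded exclusively from deposits slashed from the exposed coalition, consensus is reached at no cost to non-deviating players and their expected utility is not decreased.

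Finally I would obtain $\epsilon$-$(k,t)$-robustness by ruling out profitable coalitional deviations, going through the runs catalogued in Section~\ref{sec:proc}. Following $\vv{\sigma}$ gives each rational player at least $\epsilon>0$; the runs in which the protocol fails to terminate and those in which a colluder is itself a victim of a disagreement yield negative utility and are strictly dominated; and the disagreeing run cannot be finalized, since by hypothesis $m(k,t)$ rational players bait and Lemma~\ref{thm:unavoidable} then forces the resolved-disagreement outcome, in which at most one baiter collects $\mathcal{R}$ while the other members of $K\cup T$ are exposed and lose $\mathcal{L}$. Consequently there is a rational player $i\in K$ --- a colluder that did not bait, or a baiter that was not selected --- whose utility under $\vv{\phi}_K$ does not exceed $u_i(\vv{\sigma})$ by more than $\epsilon$, which is exactly what the ``for some $i\in K$'' clause of $\epsilon$-$(k,t)$-robustness asks for, with $\epsilon$ absorbing the negligible probability of breaking the cryptography. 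I expect this last step to be the main obstacle: identifying the precise $i\in K$ and making the inequality hold for \emph{all} coalition strategies $\vv{\phi}_K$, not just the archetypal disagreeing and baiting ones, means weighing a baiter's expected payoff $p(m)\mathcal{R}-q(m)\mathcal{L}$ against $u_i(\vv{\sigma})$, which is precisely where the deposit and reward values of Theorem~\ref{thm:usefulm} do the real work; I would therefore defer that quantitative part to Section~\ref{sec:psync}.
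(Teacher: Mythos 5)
Your proposal is correct and follows essentially the same route as the paper: $t_0$-immunity is inherited from the accountable consensus component together with the fact that the BFTCR phase is built from reliable broadcasts that terminate with $n-t_0$ compliant players, and $\epsilon$-$(k,t)$-robustness is obtained by the same case split --- no disagreement on predecisions yields termination and agreement directly, while a disagreement combined with the hypothesized $m(k,t)$ baiters forces, via Lemma~\ref{thm:unavoidable}, the reward-and-resolve outcome. Your added detail on validity/agreement/termination and your deferral of the quantitative incentive analysis to Theorem~\ref{thm:usefulm} match the paper's division of labor exactly.
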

\begin{proof}
  The proof of $t_0$-immunity follows from the proof of Polygraph's
$t_0$-immunity and the fact that the additional BFTCR phase consists
of two Byzantine fault tolerant reliable broadcasts and one additional
broadcast per player, terminating each of them if $n-t_0$ players
follow the protocol.

For $\epsilon$-$(k,t)$-robustness, it is clear that if there is no
disagreement on predecisions, then rational and correct players
are more than $n-t_0$ and thus the protocol terminates and guarantees
validity and agreement. If there is instead a disagreement on
predecisions then, as long as $m(k,t)$ players commit to bait, by
Lemma~\ref{thm:unavoidable} the only outcome is to pay the reward and
resolve the disagreement.
\end{proof}    


 We show in Theorem~\ref{thm:nhalf} that, provided $m(k,t)$ rational players commit to bait if there is a disagreement in the predecisions, the \TRAP protocol solves the rational agreement problem. We only have left to prove for which values of $\mathcal{L}$ and $\mathcal{R}$ we can guarantee that the strategy to bait the coalition strictly dominates that of terminating a disagreement for at least $m(k,t)$ rational players in the coalition. We do this in Section~\ref{sec:psync}.
\subsection{Financial component: deposits \& reward} 
\label{sec:psync}
  \label{sec:problem}
In this section, we focus on the key idea of this paper: what are the
values required for a deposit per player and a reward to players for
baiting the coalition that make a strong baiting strategy. In particular, and
derived from the BFTCR algorithm of Section~\ref{sec:proc}, we focus
on a baiting strategy that at least $m(k,t)$ rational players will
play in Theorem~\ref{thm:usefulm}. Then, we prove that the proposed \TRAP protocol implements \problem and is $\epsilon$-$(k,t)$-robust for $n>\frac{3}{2}k+3t$ and $n>2(k+t)$ in Theorem~\ref{thm:claim} and Corollary~\ref{thm:everything}.   
We show in Theorem~\ref{thm:usefulm} which values of $\mathcal{L}$ and
$\mathcal{R}$ make the disagreeing strategy a strictly dominated
strategy by the baiting strategy for at least $m(k,t)$ rational
players (i.e., a dominated strategy even if player $i$ already knows
that $m(k,t)-1$ other players are also baiting at the time that $i$ has
to decide whether to bait or not). In other words, we show in
Theorem~\ref{thm:usefulm} under which values of $\mathcal{R}$ and
$\mathcal{L}$ such strategy $\vv{\eta}$ is a strong $(k,t,m(k,t))$-baiting
strategy that satisfies \rdominance and \zeroloss.

The result of Theorem~\ref{thm:usefulm} is the key part of the
\TRAP protocol for two reasons. First, because it shows that the
first $m-1$ baiters do not even prevent a disagremeent from taking
place, and thus if the rest of $t+k-(m-1)$ colluding players want to
finalize the disagreement, they can. Second, because it shows that if $m-1$ players commit to bait, then the remaining
$t+k-(m-1)$ must take the decision on whether to commit to bait or not
independently of what the rest of them are doing. Thus, this is
analogous to a reduction from the extensive-form game into a
normal-form game for this case, played by the $t+k-(m-1)$ remaining rational and Byzantine players, in which
all rational players' dominating strategy is to bait the coalition,
regardless of what the rest are doing. Without this proof, Byzantine
players in the coalition could threat rational players to also bait if
they see them baiting, creating a deterrent and changing the
equilibrium of rational players into colluding to finalize the
disagreement.

We first show in Lemma~\ref{lem:can} that the \TRAP protocol
guarantees that no player can decide to join the baiting strategy
$\vv{\eta}$ and become a valid candidate for the winner consensus after
learning that another $m(k,t)$ players played $\vv{\eta}$: they must
take that decision before they know whether $m(k,t)$ other players
will play $\vv{\eta}$ or not.

\begin{lem}
  Let $n$ players play the associated game of the \TRAP protocol
$\vv{\sigma}$, out of which $k$ can be rational and $t$ Byzantine,
with $n>\frac{3}{2}k+3t$ and $n>2(k+t)$. Suppose a run in which a
coalition causes a disagreement on predecisions and players start the
BFTCR phase. Then, deviating player $i$ in the coalition cannot become
a valid candidate for the reward unless it commits to bait before it
learns that $m(k,t)$ other players commit to bait.
  \label{lem:can}
\end{lem}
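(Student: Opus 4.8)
The plan is to unfold the definition of a \emph{valid candidate} in terms of the messages of Algorithm~\ref{alg:phase} and then to prove, by a quorum/timing argument in the spirit of Lemma~\ref{lem:imp}, that the event ``$i$ learns that $m(k,t)$ other players committed to bait'' can only occur after enough second-group reliable broadcasts have fixed their content that $i$, by committing afterwards, can no longer appear in a proof-of-baiting.

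First I would spell out valid candidacy operationally. A correct player $c$ treats $i$ as a valid candidate only after delivering a proof-of-baiting for $i$: $t_0+1$ second-group reliable-broadcast messages, from distinct sources, each of whose lists of the first $n-t_0$ delivered first-group commitments contains $i$'s first-group commitment (lines~\ref{lin:sta},~\ref{lin:rb21}--\ref{line:broadcast}), and such that this commitment, once $i$'s key is revealed, decrypts to a valid list of PoFs of the coalition responsible for the disagreement on predecisions (lines~\ref{lin:pofr1}--\ref{lin:pofr2}). Two structural facts matter: (i) since the second-group messages are themselves reliable broadcasts, the content of such a message --- hence whether it lists $i$'s commitment --- is fixed the moment its source initiates its broadcast, which happens only after that source has delivered $n-t_0$ first-group commitments; and (ii) since first-group commitments are signed and the commitment scheme is binding, no second-group message can legitimately list $i$'s commitment before $i$ has actually produced that (PoF-)commitment.

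Next I would pin down the earliest moment $i$ can learn that $m(k,t)$ other players committed to bait. To learn this, $i$ must obtain and verify, for $m(k,t)$ distinct players, both their first-group commitments and their decryption keys, since by the hiding property an unopened commitment --- or a second-group list of such commitments --- reveals nothing about whether it is a PoF-commitment. But a player releases a first-group key (its own, line~\ref{line:bc1}, or a relayed one, line~\ref{line:bc2}) only after delivering $n-t_0$ second-group messages and $n-t_0$ first-group messages; so by the time even one first-group key is in circulation, at least $n-t_0$ second-group messages have fixed their content. Hence at most $t_0$ second-group messages have content fixed after the first key is released, and among the $t_0+1$ messages witnessing a proof-of-baiting for $i$ at least one, say $M^\ast$, is fixed before any key is released. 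By fact (ii), $i$ produced its PoF-commitment before $M^\ast$ was fixed, hence before any key was released; but $i$ cannot have learned that $m(k,t)$ players committed to bait before any key was released. This is the contrapositive of the claim. Throughout, $n>2(k+t)$ and $n>\tfrac{3}{2}k+3t$ are used only to keep the setting of Lemma~\ref{thm:unavoidable} well-posed --- in particular $m(k,t)\le k$, and the disagreement on predecisions splits the correct players into exactly two groups.

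The delicate step is making fact (ii) --- and the very notion of ``$i$ committed to bait'' --- robust against side channels: a colluding or Byzantine player might privately hand $i$ a key and PoFs during the first round, or $i$ might pre-share an encrypted PoF-commitment so that an accomplice slips it into an early-fixed second-group list. I would resolve this by noting that producing the encrypted PoF-commitment already fixes the committed content (binding), so whenever a legitimately counted second-group message lists $i$'s commitment and that commitment later opens to PoFs, $i$ had irrevocably decided to bait --- i.e.\ ``committed to bait'' as in line~\ref{lin:sta} --- no later than when that message was fixed; and ``$i$ learns that $m(k,t)$ others committed to bait'' must rest on actually opened PoF-commitments, of which, before the reveal stage, $i$ can have at most those volunteered out of turn, something the rational baiters whose commitments matter (Lemma~\ref{thm:unavoidable}) have no incentive to do, while Byzantine players are too few ($t\le t_0$) to supply $m(k,t)$ on their own whenever $m(k,t)>t$.
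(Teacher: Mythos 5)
Your proof is correct, but it reaches the conclusion by a genuinely different route than the paper. The paper's proof works through the partition structure of the disagreement: it first uses $n>\frac{3}{2}k+3t$ to show that at least $m(k,t)$ rational players must sit in the termination quorums of both partitions $A$ and $B$, and then does a head-count of correct players --- the set $D$ of at least $2(n-t_0)-2(k+t)$ correct players whose second-group lists are already locked in when the baiters reveal, versus the residual set $C$ --- to bound by $|C|+k+t-t_0\leq t_0$ (using $n>2(k+t)$) the number of second-group broadcasts that could still absorb a late commitment. You instead anchor directly on the protocol's key-release guard: since a key is broadcast only after $n-t_0$ second-group reliable broadcasts have been delivered (hence content-fixed), at most $t_0$ of the $n$ second-group instances remain malleable once any baiter's key circulates, which immediately falls short of the $t_0+1$ messages a proof-of-baiting requires. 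Your argument is shorter and does not consume the numeric hypotheses for the main counting step; the paper's longer computation additionally establishes facts the lemma does not state but the surrounding development relies on, namely that the $m(k,t)$ genuine baiters do obtain $|D|\geq t_0+1$ supporting second-group broadcasts (so they \emph{are} valid candidates) and that they can wait for $n-t_0$ second-group deliveries without compromising termination. Two points in your write-up deserve care. First, the guard ``a player releases a key only after $n-t_0$ second-group deliveries'' is a protocol rule for correct players; for the deviating baiters it must be stipulated as part of the baiting strategy $\vv{\eta}$ --- the paper makes this explicit (``each of the $m(k,t)$ players can wait for $n-t_0$ deliveries of the second reliable broadcast before revealing''), and your argument silently needs it, since a baiter revealing early would reopen the window you close. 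Second, your handling of side channels and pre-shared commitments is more explicit than the paper's and is resolved the right way (binding forces the ``commit to bait'' decision to the time the commitment is produced, and rational baiters have no incentive to leak keys out of turn), so this is a strengthening rather than a gap.
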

\begin{proof}
  We show that if $m(k,t)$ rational players in the coalition
play the baiting strategy, becoming valid candidates to win the
reward, then the remaining $k+t-m(k,t)$ cannot obtain valid PoBs to
become candidates of the winner consensus after learning that $m(k,t)$
players become candidates.  Given that the non-baiting members of the
coalition are trying to finalize a disagreement, they will still split
non-deviating players into two partitions $A$ and $B$ for the BFTCR
protocol. Hence, we look at how many rational players must take part
in both partitions of the BFTCR protocol. Notice that $|A|+|B|+t+k\leq
n$, $|A|+k+t\geq n-t_0$ and $|B|+k+t\geq n-t_0$. Thus, analogous to
how we calculate $m$ in Lemma~\ref{lem:imp}, we have that $c\geq
(n-t_0)-\frac{n-t-k}{2}$ is the number of members of the coalition
that must participate in a partition for it to terminate deciding a
predecision, with $A\cap B=\emptyset$, as their predecisions
differ. We are interested in calculating $c-t$, the minimum number of
rational players out of these $c$ members of the coalition, this is
why we include as many Byzantine players as possible. Notice also that
we want to see how many rational players must take part in both
partitions, meaning that we are interested in
$c-t-\frac{k}{2}=(n-t_0)-\frac{n+t}{2}\geq m(k,t)$ for
$n>\frac{3}{2}k+3t$.

Hence, both partitions will include at least $m(k,t)$ repeated
rational players. What is left to prove is that if these $m(k,t)$
players commit to bait, then by the time they reveal their commitment,
the remaining players cannot collude to try and obtain PoBs to become
valid candidates of the winner consensus too. Since
$|A|+k+t\geq n-t_0$ and $|B|+k+t\geq n-t_0$, there are
$|D|\geq 2(n-t_0)-2k-2t$ correct players that delivered at least
$m(k,t)$ commitments to bait, for $|D|\leq |A|+|B|$, if these
$m(k,t)$ repeated rational players commit to bait. Notice that
$|D|\geq t_0+1$ for $n>2(k+t)$. Then, each of the $m(k,t)$ players can
wait for $n-t_0$ deliveries of the second reliable broadcast before
revealing their commitment by broadcasting their key without
compromising termination. Thus, we must calculate for which values of
$k$ and $t$ the remaining players cannot obtain PoBs to become valid
candidates, that is, for which values of $k$ and $t$ other players
that did not bait yet cannot include the new commitment to bait in
$t_0+1$ valid second reliable broadcasts. Since the remaining set of
correct players $C$ such that $|C|=n-t-k-|D|$ are
$|C|\geq n-k-t-(2(n-t_0)-2k-2t)$, we calculate for which values of $k$
and $t$ we have $|C|+k+t-t_0\leq t_0$, which results in
$n>2(k+t)$. This means that the $m(k,t)$ baiters can be sure that no deviating
player can commit to bait and win the reward without being a valid
candidate for the winner consensus.
\end{proof}

    We use the result from Lemma~\ref{lem:can} to prove \zeroloss and
\rdominance in Theorem~\ref{thm:usefulm}.

\begin{thm}[\zeroloss and \rdominance]
  
  \label{thm:usefulm}
  Let $\vv{\sigma}$ be the \TRAP protocol, executed by $n$ players of
which exactly $k$ are rational and $t$ Byzantine, for some values of $k,t$ satisfying
$n>\max(\frac{3}{2}k+3t,2(k+t))$. Let $\vv{\eta}$ be the strategy in
which $m(k,t)$ rational players reveal PoFs of the coalition if there
is a disagreement on predecisions. Then, $\vv{\eta}$ is a strong
baiting strategy if:
\begin{enumerate}
  \item 
each player is required to deposit $\mathcal{L}=d\cdot
\mathcal{G}$, with $d>\frac{m(k,t)}{k(t_0-m(k,t)+1)}$, and
%
  \item the baiting reward $\mathcal{R}$ is such that $\mathcal{R}=t_0\mathcal{L}$.
\end{enumerate}
\end{thm}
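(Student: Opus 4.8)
The plan is to verify the three clauses of Definition~\ref{def:ekf-bs} for the strategy $\vv{\eta}$ — that it is a $(k-m(k,t),t)$-punishment strategy, that baiting is a best response for the baiters, and that it is \emph{strong} — using conditions~1 and~2 to tune the payoffs appearing in the run taxonomy of Section~\ref{sec:proc}. I would first nail down the accounting. When a coalition causes a disagreement on predecisions and $m(k,t)$ of its rational members commit to bait, Lemma~\ref{thm:unavoidable} forces the BFTCR phase to resolve the predecisions, pay $\mathcal{R}$ to exactly one baiter, and slash $\mathcal{L}$ from each of the $\ge t_0$ exposed, non-winning coalition members; and Lemma~\ref{lem:can} forces every colluder to decide whether to commit to bait \emph{before} it can tell whether $m(k,t)$ others will, so the valid-candidate set is fixed by the blind commitments. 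With the refinement of $\vv{\eta}$ in which a baiter waits for $n-t_0$ deliveries of the second reliable broadcast before revealing its key — which preserves termination because of the bound $|D|\ge t_0+1$ (valid for $n>2(k+t)$) from the proof of Lemma~\ref{lem:can}, and which keeps the candidate set equal to exactly the $m(k,t)$ committed baiters — each baiter wins with probability $p(m(k,t))=1/m(k,t)$, so baiting yields expected utility $p(m(k,t))\mathcal{R}-q(m(k,t))\mathcal{L}=(\mathcal{R}-(m(k,t)-1)\mathcal{L})/m(k,t)=(t_0-m(k,t)+1)\mathcal{L}/m(k,t)$ after substituting $\mathcal{R}=t_0\mathcal{L}$; this is positive as $m(k,t)\le t_0$.

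For baiting dominance, and hence \rdominance, I would fix $i\in P$ and compare $\vv{\eta}_P$ against an arbitrary $\vv{\theta}_P$. If fewer than $m(k,t)$ members of $P$ bait under $\vv{\theta}_P$, then by the threshold in Lemma~\ref{thm:unavoidable} the remaining colluders $K\setminus P$ together with $T$ can finalize the disagreement, and the best $i$ can obtain is its disagreement share $g=\mathcal{G}/k$ (run~\ref{str:dis}); every other outcome for $i$ in this case is one of the dominated runs~\ref{str:suf},~\ref{str:ben},~\ref{str:ter}. If at least $m(k,t)$ members of $P$ bait, the disagreement is prevented, Lemma~\ref{lem:can} prevents $i$ from manufacturing extra valid candidates or censoring the other committed baiters, and any additional committed baiter only shrinks $i$'s share, so $u_i$ is at most the value computed above, which $\vv{\eta}_P$ attains. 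Hence $\vv{\eta}_P$ maximizes $i$'s utility over all $\vv{\theta}_P$ exactly when $(t_0-m(k,t)+1)\mathcal{L}/m(k,t)>\mathcal{G}/k$, which, with $\mathcal{L}=d\mathcal{G}$, is condition~1: $d>m(k,t)/\bigl(k(t_0-m(k,t)+1)\bigr)$. Since $m(k,t)=\floor{\tfrac{k+t-n}{2}+t_0}+1>\tfrac{k+t-n}{2}+t_0$, this is a $(k,t,m(k,t))$-baiting strategy with the $m$ required by \rdominance.

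For the punishment clause, observe that when $m(k,t)$ rationals bait, the $\le k-m(k,t)$ non-baiting rational colluders are trapped: the disagreement is resolved, they are named in the PoFs, and they do not win the reward, so they are slashed and end with negative utility (run~\ref{str:suf}), strictly below the positive payoff of following $\vv{\sigma}$; thus $\vv{\eta}$ is a $(k-m(k,t),t)$-punishment strategy with respect to $\vv{\sigma}$. For the strong clause, and hence \zeroloss, take a purely rational coalition $K$ with $|K\cap P|\ge m(k,t)$. If $|K|\le t_0$ it cannot force a disagreement, $\vv{\eta}$ never fires, the outcome coincides with that of $\vv{\sigma}$, and $\sum_{i\in K}u_i$ is unchanged. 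If $|K|\ge t_0+1$ and $K$ attempts a disagreement, the $\ge m(k,t)$ baiters in $K$ prevent it and the BFTCR resolution pays $\mathcal{R}=t_0\mathcal{L}$ entirely out of the $\ge t_0$ slashed deposits of $K$'s own non-winning members, so $\sum_{i\in K}u_i$ moves by $\mathcal{R}-(\#\text{slashed})\mathcal{L}\le(t_0-t_0)\mathcal{L}=0$ while no member of $K$ collects $g$; hence $\sum_{i\in K}u_i(\vv{\sigma}_{N-(K\cup P)},\vv{\phi}_{K\setminus P},\vv{\eta}_P,\sigma_s)\le\sum_{i\in K}u_i(\vv{\sigma},\sigma_s)$. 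The equality $\mathcal{R}=t_0\mathcal{L}$ is exactly the threshold at which the slashed deposits cover the reward, so condition~2 is what this clause demands.

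The main obstacle is justifying the probability $p(m(k,t))=1/m(k,t)$: everything downstream is bookkeeping that conditions~1 and~2 are built to close, but this value only holds once one shows the valid-candidate set is provably exactly the $m(k,t)$ blindly-committed baiters — no colluder can shrink it by racing to reveal and censoring the others, nor grow it by committing late once the disagreement is visibly failing. This is precisely where Lemma~\ref{lem:can}, the $|D|\ge t_0+1$ count for $n>2(k+t)$, and the ``wait for $n-t_0$ second-broadcast deliveries'' refinement of $\vv{\eta}$ are needed; one must additionally argue that a Byzantine player committing early to bait works against the very disagreement the $\epsilon$-$(k,t)$-robustness model has it supporting, so that the worst case for the rational baiters really is the $m(k,t)$-candidate case. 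Once that is settled, the two inequalities $(t_0-m(k,t)+1)\mathcal{L}/m(k,t)>\mathcal{G}/k$ and $\mathcal{R}=t_0\mathcal{L}\le(\#\text{slashed})\mathcal{L}$ deliver conditions~1 and~2 respectively.
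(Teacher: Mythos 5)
Your proposal is correct and follows essentially the same route as the paper's proof: the same two inequalities $p(m(k,t))\mathcal{R}-q(m(k,t))\mathcal{L}>\mathcal{G}/k$ and $t_0\mathcal{L}\geq\mathcal{R}$ yield conditions~1 and~2, with Lemma~\ref{lem:can} doing the work of pinning the valid-candidate set to the $m(k,t)$ blind committers so that $p(m(k,t))=1/m(k,t)$. The only difference is that you spell out the punishment and strong clauses of Definition~\ref{def:ekf-bs} explicitly, which the paper's proof leaves largely implicit.
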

\begin{proof}
  Recall that the gain is split equally among all $k$ rational players
in the coalition $g=\mathcal{G}/k$. To guarantee \zeroloss, the sum of
losses from the coalition must always be equal or greater than the
reward given for the coalition to always lose funds while failing to
disagree, that is $t_0\mathcal{L}\geq \mathcal{R}\iff \mathcal{L}\geq
\frac{\mathcal{R}}{t_0}$.

As a result, the baiting strategy $\vv{\eta}$ must strictly dominate the
strategy to disagree for rational players, even if a rational player
knows another $m-1$ other rational players also play the same strategy
$\vv{\eta}$ committing to bait. Since the probability of winning the bait between $m$
players is uniformly distributed $p(m)=\frac{1}{m}$ we have that the
utility for a player to play the baiting strategy knowing that another
$m(k,t)-1$ players are playing the same strategy is
$p(m(k,t))\mathcal{R}-q(m(k,t))\mathcal{L}$. If, instead, the player
disagrees then the player's utility is $\frac{\mathcal{G}}{k}$. As
such, and since Lemma~\ref{lem:can} shows that no rational player can
become a valid candidate to win the reward after learning that $m(k,t)$ other
players commit to bait, we obtain that $\vv{\eta}$ strictly dominates the
disagreeing strategy if
$p(m(k,t))\mathcal{R}-q(m(k,t))\mathcal{L}>\frac{\mathcal{G}}{k}$ and
replacing $\mathcal{R}$ by $t_0\mathcal{L}$, and $\mathcal{L}$ by
$d\mathcal{G}$ we obtain:

\begin{align*} &d\!>\!\Big(k\big(t_0p(m(k,t))-q(m(k,t))\big)\Big)^{-1}\!%
\Leftrightarrow\!&d\!>\!\frac{m(k,t)}{k(t_0-m(k,t)+1)}. 
\end{align*}

As for the reward, $t_0\mathcal{L}\geq \mathcal{R}$ for the slashed
deposits to always cover the reward, and thus we set
$t_0\mathcal{L}=d\mathcal{G}t_0=\mathcal{R}$.

Hence, $m(k,t)$ will play the baiting strategy (\rdominance) of which
one will be rewarded, and the reward will be paid with the deposits of
the fraudsters (\zeroloss). 
\end{proof}
Notice that any two values
$\mathcal{L}$ and $\mathcal{R}$ suffice if they satisfy
\inlineequation[eq:rl]{p(m(k,t))\mathcal{R}-q(m(k,t))\mathcal{L}>\frac{\mathcal{G}}{k}}, so that rational players prefer to bait than to disagree,
and \inlineequation[eq:rl2]{t_0\mathcal{L}\geq \mathcal{R}}, so that the reward is always less than the slashed deposits.

The key to these two equations lies in the trade-off between $\mathcal{R}$ and $\mathcal{L}$, that is: $\mathcal{R}$ must be sufficiently big compared to $\mathcal{L}$ so that players prefer to bait than to disagree (Equation~\ref{eq:rl}), but $\mathcal{R}$ must be sufficiently small compared to $\mathcal{L}$ so that the slashed deposits can always pay for the reward (Equation~\ref{eq:rl2}).

It is already possible to derive from Theorem~\ref{thm:usefulm}
results for the number of Byzantine players tolerated for $(k-t,t)$-robustness, given a deposit. That is, suppose that $\vv{\eta}$ only requires $m(k,t)=1$ rational player to satisfy agreement, and let $\mathcal{L}=d\cdot G$, then every coalition of size at least $t_0+1$ players has at least $k\geq t_0+1-t$ rational players, and thus the maximum amount of Byzantine players tolerated for $\epsilon$-$(k-t,t)$-robustness is $t<t_0+1-\frac{1}{t_0d}$. For example, let us set the deposit $\mathcal{L}=d\mathcal{G}$ to $d=\frac{1}{n}$, i.e., the total
deposit is $\mathcal{D}=\mathcal{L}\cdot n=\mathcal{G}$, and $n=100$, it follows that the
\TRAP protocol is $\epsilon$-$(k-t,t)$-robust and $t\leq 30$. 
If instead
$d=\frac{1}{3n}$, then $t\leq 24$. 

Finally, we gather all results together in Theorem~\ref{thm:claim}, and Corollary~\ref{thm:everything}.
\begin{thm}
  Let $\vv{\sigma}$ be the \TRAP protocol, executed by $n$ players of
  which $k$ are rational and $t$ Byzantine, for all values $k,t$ satisfying
  $n>\max(\frac{3}{2}k+3t,2(k+t))$. Let $\vv{\eta}$ be the strategy in
  which $m(k,t)$ rational players reveal PoFs of the coalition if there
  is a disagreement on predecisions. Then, $\vv{\eta}$ is a strong
$(k,t,m(k,t))$-baiting strategy if:

  \begin{enumerate}
  \item 
  Each player is required to deposit $\mathcal{L}=d\cdot \mathcal{G}$, where $d>\max_{(k,t)}\big(\frac{m(k,t)}{k(t_0-m(k,t)+1)}\big)$, and 
  \item  
  the baiting reward is $\mathcal{R}=t_0\mathcal{L}$.
  \end{enumerate}
      \label{thm:claim}
    \end{thm}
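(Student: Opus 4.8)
The plan is to obtain Theorem~\ref{thm:claim} as a uniform strengthening of Theorem~\ref{thm:usefulm}. That earlier theorem already establishes, for each \emph{individual} pair $(k,t)$ with $n>\max(\frac{3}{2}k+3t,2(k+t))$, that the deposit $\mathcal{L}=d\cdot\mathcal{G}$ with $d>\frac{m(k,t)}{k(t_0-m(k,t)+1)}$ and reward $\mathcal{R}=t_0\mathcal{L}$ make $\vv{\eta}$ a strong $(k,t,m(k,t))$-baiting strategy (this is exactly where \rdominance, \zeroloss, and the use of Lemma~\ref{lem:can} to rule out late baiters are discharged). The reward here has the same shape, and the only change is that $d$ is picked above the supremum of those pair-specific thresholds. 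So I would argue: fix an arbitrary admissible $(k,t)$; show the globally chosen $d$ still exceeds that pair's threshold; invoke Theorem~\ref{thm:usefulm} verbatim for that pair; and conclude, since $(k,t)$ was arbitrary, that $\vv{\eta}$ is a strong $(k,t,m(k,t))$-baiting strategy for every admissible $(k,t)$.

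The one step needing a little care is that the maximum defining $d$ is a well-defined finite quantity. I would take the admissible set $S=\{(k,t):k\geq 1,\ t\geq 0,\ n>\frac{3}{2}k+3t,\ n>2(k+t)\}$, note that it is finite since $n$ is fixed and $n>2(k+t)$ forces $k,t<n/2$, and then check that each term $\frac{m(k,t)}{k(t_0-m(k,t)+1)}$ with $(k,t)\in S$ has a strictly positive denominator. The factor $k$ is at least $1$; for the factor $t_0-m(k,t)+1$, from $n>2(k+t)$ and integrality one gets $n\geq 2(k+t)+1\geq (k+t)+2$, hence $\frac{k+t-n}{2}\leq -1$, so $m(k,t)=\floor{\frac{k+t-n}{2}+t_0}+1\leq t_0$ and therefore $t_0-m(k,t)+1\geq 1$. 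Thus every term is a finite real, the maximum is attained on the finite set $S$, and any $d$ strictly above it is well defined. (Terms with $m(k,t)\leq 0$ or $m(k,t)>k$, if any, only make $d$ larger, which is harmless since the corresponding instance of Theorem~\ref{thm:usefulm} is then vacuous.)

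It then remains only to close the loop: for such $d$ and any $(k,t)\in S$, both hypotheses of Theorem~\ref{thm:usefulm} hold --- $d>\frac{m(k,t)}{k(t_0-m(k,t)+1)}$ by the choice of the maximum, and $\mathcal{R}=t_0\mathcal{L}$ by construction --- so $\vv{\eta}$ is a strong $(k,t,m(k,t))$-baiting strategy for that pair, which is the claim. I would also remark explicitly that enlarging the deposit beyond the pair-specific bound cannot damage \zeroloss: since the ratio $\mathcal{R}/\mathcal{L}=t_0$ is held fixed, the inequality $t_0\mathcal{L}\geq\mathcal{R}$ underpinning \zeroloss stays an equality. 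I do not anticipate a genuine obstacle here; all the substantive content already lives in Theorem~\ref{thm:usefulm}, and Theorem~\ref{thm:claim} is a straightforward uniformization over the finite admissible region, the only subtlety being the positivity of the denominators that guarantees the maximum is meaningful.
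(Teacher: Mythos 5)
Your proposal is correct and follows essentially the same route as the paper: the paper's proof of Theorem~\ref{thm:claim} likewise reduces to Theorem~\ref{thm:usefulm} applied to each admissible pair $(k,t)$ and then chooses $d$ above the maximum of the pair-specific thresholds, with $\mathcal{R}=t_0\mathcal{L}$ unchanged. Your additional checks (finiteness of the admissible set and positivity of $k(t_0-m(k,t)+1)$ via $m(k,t)\leq t_0$) only make the uniformization step more explicit than the paper's brief argument.
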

        \begin{proof}
    Theorem~\ref{thm:nhalf} uses the proof of \baitingagreement from
Lemma~\ref{thm:unavoidable} to show that if $m(k,t)$ play the
baiting strategy in the event of a disagreement on predecisions, then
the \TRAP protocol solves the \problem problem. Theorem~\ref{thm:usefulm}
shows that $m(k,t)$ will play the baiting strategy (\rdominance) of
which one will be rewarded, and the reward will be paid with the
deposits of the fraudsters (\zeroloss).

Finally, we consider all possible values of $k$ and $t$ analogously to
Theorem~\ref{thm:usefulm}, deriving a value of $d$ that holds for all
possible values of $k$ and $t$:
$d>\max_{(k,t)}\big(\frac{m(k,t)}{k(t_0-m(k,t)+1)}\big)$.
\end{proof}

Notice that the greater the size of the coalition, the greater $d$
must be in order for the protocol to be
$\epsilon$-$(k,t)$-robust. However, for $n>\frac{3}{2}k+3t$ and
$n>2(k+t)$, since for every two rational players that join the coalition
one Byzantine must leave, the coalition that maximizes the total deposit 
$\mathcal{D}=\mathcal{L}n=d\mathcal{G}n$ 
is a coalition of $k=1$ rational player and
$t=t_0$ Byzantine players, and that means $d>\frac{1}{\ceil{\frac{n}{3}}-1}$. Corollary~\ref{thm:everything}
shows such particular robustness.  

\begin{cor}
  Let $\vv{\sigma}$ be the \TRAP protocol. Then $\vv{\sigma}$ is $\epsilon$-$(k,t)$-robust for the rational agreement problem for $n>\frac{3}{2}k+3t$ and $n>2(k+t)$ if the following predicates hold:
  \begin{enumerate}
  \item Each player is required to deposit $\mathcal{L}=d\cdot \mathcal{G}+\delta$, where $d =\frac{1}{\ceil{\frac{n}{3}}-1}$ 
    and $\delta > 0$, and
  \item the baiting reward is $\mathcal{R}=t_0\mathcal{L}$.
  \end{enumerate}
  \label{thm:everything}
\end{cor}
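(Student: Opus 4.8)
The plan is to derive the corollary directly from Theorem~\ref{thm:claim}: I only need to verify that the concrete choice $\mathcal{L}=d\cdot\mathcal{G}+\delta$ with $d=\frac{1}{\ceil{n/3}-1}$ and $\delta>0$, together with $\mathcal{R}=t_0\mathcal{L}$, meets the hypotheses of that theorem for every admissible pair $(k,t)$ satisfying $n>\frac{3}{2}k+3t$ and $n>2(k+t)$. Theorem~\ref{thm:claim} asks for $\mathcal{R}=t_0\mathcal{L}$ and for a deposit coefficient strictly above $\max_{(k,t)}\bigl(\frac{m(k,t)}{k(t_0-m(k,t)+1)}\bigr)$, and in return certifies that $\vv{\eta}$ is a strong $(k,t,m(k,t))$-baiting strategy; Theorem~\ref{thm:nhalf} (\baitingagreement) then upgrades this to a solution of \problem and hence to $\epsilon$-$(k,t)$-robustness, while \rdominance from Theorem~\ref{thm:usefulm} guarantees that $m(k,t)$ rational players actually play $\vv{\eta}$ whenever a coalition deviates. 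So the entire argument reduces to (i) evaluating that maximum and (ii) absorbing the strict inequality into $\delta$.

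The heart of the proof is the claim $\max_{(k,t)}\frac{m(k,t)}{k(t_0-m(k,t)+1)}=\frac{1}{t_0}$, with the maximum attained at the single-rational, maximal-Byzantine coalition $k=1,\,t=t_0$. When $k=1$, the constraint $0<m(k,t)\leq k$ that is part of the definition of a baiting strategy forces $m(1,t)=1$, so the ratio equals $\frac{1}{1\cdot(t_0-1+1)}=\frac{1}{t_0}$ for every admissible $t$, and $t=t_0$ is simply the largest Byzantine count allowed. For $k\geq 2$ I would bound $m(k,t)\leq\frac{k+t-n}{2}+t_0+1$ using $m(k,t)=\floor{\frac{k+t-n}{2}+t_0}+1$, substitute into $\frac{m}{k(t_0-m+1)}$, and invoke $n>\frac{3}{2}k+3t$ to show the ratio never exceeds $\frac{1}{t_0}$; this is the quantitative form of the remark preceding the corollary that trading two rational players for one Byzantine player only shrinks the admissible coalition, so larger coalitions force a smaller deposit. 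I expect this two-parameter extremality computation, with its case analysis on $n \bmod 3$ and on the floor, to be the main obstacle; the rest is routine bookkeeping.

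Once the maximum is pinned to $\frac{1}{t_0}=\frac{1}{\ceil{n/3}-1}$, I finish as follows. Taking $d=\frac{1}{t_0}$ and $\mathcal{L}=d\cdot\mathcal{G}+\delta$ with $\delta>0$ gives $\mathcal{L}>\frac{\mathcal{G}}{t_0}$, i.e., the effective deposit coefficient $\mathcal{L}/\mathcal{G}=d+\delta/\mathcal{G}$ is \emph{strictly} larger than $\max_{(k,t)}\frac{m(k,t)}{k(t_0-m(k,t)+1)}$ — this is precisely what $\delta$ is for, since $d$ itself only meets the bound with equality. Setting $\mathcal{R}=t_0\mathcal{L}$ also makes $t_0\mathcal{L}\geq\mathcal{R}$, so the deposits slashed from the $t_0$ exposed fraudsters always cover the reward, yielding \zeroloss. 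Hence Theorem~\ref{thm:claim} applies for every $(k,t)$ with $n>\frac{3}{2}k+3t$ and $n>2(k+t)$: $\vv{\eta}$ is a strong $(k,t,m(k,t))$-baiting strategy. Combining \rdominance (Theorem~\ref{thm:usefulm}), which ensures $m(k,t)$ rational players play $\vv{\eta}$ under any coalition deviation, with \baitingagreement (Theorem~\ref{thm:nhalf}), which ensures the \TRAP protocol then solves \problem, we conclude that \TRAP is $t_0$-immune and $\epsilon$-$(k,t)$-robust for all such $(k,t)$, which is the statement of the corollary.
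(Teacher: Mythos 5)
Your proposal is correct and follows essentially the same route as the paper: it reduces the corollary to Theorem~\ref{thm:claim} (hence to Theorems~\ref{thm:nhalf} and~\ref{thm:usefulm}), identifies the worst-case coalition $k=1$, $t=t_0$ as the one pinning the deposit coefficient to $\frac{1}{\ceil{n/3}-1}=\frac{1}{t_0}$, and uses $\delta>0$ to turn that bound into the strict inequality required by the theorem, with $\mathcal{R}=t_0\mathcal{L}$ giving \zeroloss. If anything, your plan to verify explicitly that the ratio $\frac{m(k,t)}{k(t_0-m(k,t)+1)}$ stays below $\frac{1}{t_0}$ for $k\geq 2$ is more detailed than the paper, which only justifies the maximizer by the informal ``two rationals trade for one Byzantine'' remark preceding the corollary.
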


Thus, there are two possible outcomes for the \TRAP protocol:
  \begin{itemize}[leftmargin=* ,wide=\parindent]
  \item if the coalition is made by so many rational players that deviating does not compensate the risk of losing their deposits, then the \TRAP protocol will provide agreement at predecision level without paying a reward $\mathcal{R}$, or
  \item if the coalition has enough Byzantine players to make the deviation into two predecisions profitable, then enough $m(k,t)$ rational players in the coalition will bait so that the disagreement on predecisions can safely be resolved and decided, and one rational player among the baiters will receive a reward $\mathcal{R}$, paid entirely by the deposits of the rest of the provably fraudulent players.
  \end{itemize}
    In both scenarios, the \TRAP protocol implements \problem, being $\epsilon$-$(k,t)$-robust for $n>\max\left(\frac{3}{2}k+3t,2(k+t)\right)$.

\section{Conclusion}
\label{sec-9}
In this paper, we showed that rational players help reduce the dependability on correct players to solve the Byzantine consensus problem.
To this end, we introduced a necessary and sufficient baiting strategy to solve the rational agreement problem---a variant of the Byzantine consensus problem that also tolerate rational players---under partial synchrony.
Based on this strategy, we also proposed \TRAP, a novel Byzantine consensus protocol among $n>\max\big(\frac{3}{2}k+3t,2(k+t)\big)$ players, where $k$ players are rational and $t$ are Byzantine. 
This protocol tolerates the coordinated deviations of up to $k$ rational players and $t$ Byzantine players, solving consensus in the presence of less than $2n/3$ correct players.
As future work, it would be interesting to explore whether our bound
$n>\max\big(\frac{3}{2}k+3t,2(k+t)\big)$ is tight, and to consider the impact
of non-negligible costs of computation and communication. We are
currently working at reducing our bound to $n>\frac{3}{2}k+3t$ with
a novel VSS scheme.

\bibliography{game-theory}

\clearpage
\appendix
\section{Extended Example Figure}
\label{sec:exfig}

Figure~\ref{fig:diag2} depicts
a slightly extended version of the execution example of Figure~\ref{fig:diag}.
Similarly to Figure~\ref{fig:diag}, the execution
starts with $k+t$ Byzantine and rational players causing a disagreement
on predecisions.  However, now we detail further how the $m$ baiters
prevent termination of the BFTCR protocol. In particular, by not
committing to a value in the first reliable broadcast of BFTCR, the
$m$ baiters can prevent players in $A$ and in $B$ from terminating in
any of the two partitions. Thus, the $m$ baiting players wait till
they receive certificates from players in $A$ and in $B$ in order to
construct PoFs. Then, they wait till they deliver enough values from
the second group reliable broadcasts from players in partitions $A$
and $B$ that guarantee that no other Byzantine or rational player can
become a valid candidate once they reveal that they are baiting (as we
showed in the proof of Lemma~\ref{lem:can}). At this point, the $m$
players reveal their PoFs by sending the decryption key to their
commitment. Then, players in $A$ and $B$ can resolve their
disagreement on predecisions, choose a winner of the reward from among
the $m$ valid candidates at random, and punish the rest of deviating
players.

\begin{figure}[htp]
    \hspace{-6em}
    \pgfsetlayers{background,main,foreground}
    \begin{tikzpicture}[node distance=1cm,auto,>=stealth']
      \tikzstyle{every node}=[font=\small]
      \node[] (byz) {$t$};
         \node[right = of byz,xshift=1.5em] (km) {$k-m$};
   \node[right = of km,xshift=1.5em] (m) {$m$};
   \node[right = of m,xshift=4.5em] (a) {$\color{blue}A$};
   \node[right = of a,xshift=4.5em] (b) {$\color{red}B$};
   \draw [decorate, 
    decoration = {calligraphic brace,
        raise=5pt,
        amplitude=5pt}] (a.west) --  (b.east)
      node[pos=0.45,above=10pt,black]{$A\cap B=\emptyset$, partition of correct};

      \draw [decorate, 
    decoration = {calligraphic brace,
        raise=5pt,
        amplitude=5pt}] (byz.west)+(-0.2,0) -- (0.3,0)+(byz.east)
      node[pos=0.2,above=10pt,black]{Byzantines};

      \draw [decorate, 
      decoration = {calligraphic brace,
        raise=5pt,
        amplitude=5pt}] (km.west) -- (m.east)
      node[pos=0.5,above=10pt,black,align=center]{$k$ rationals, of which $m$ bait};

   \def\vara{4}
   \node[below of=byz, node distance=\vara cm] (byz_ground) {};
   \node[below of=km, node distance=\vara cm] (km_ground) {};
   \node[below of=m, node distance=\vara cm] (m_ground) {};
   \node[below of=a, node distance=\vara cm] (a_ground) {};
   \node[below of=b, node distance=\vara cm] (b_ground) {};
   
   \def\varb{0.22}
   \draw (byz) -- ($(byz)!\varb!(byz_ground)$);
   \draw (km) -- ($(km)!\varb!(km_ground)$);
   \draw (m) -- ($(m)!\varb!(m_ground)$);
   \draw (a) -- ($(a)!\varb!(a_ground)$);
   \draw (b) -- ($(b)!\varb!(b_ground)$);      

   \def\varc{0.36}
   \def\vard{1.595}
   \draw ($(byz)!\varc!(byz_ground)$) -- ($(byz)!\vard!(byz_ground)$);
   \draw ($(km)!\varc!(km_ground)$) -- ($(km)!\vard!(km_ground)$);
   \draw ($(m)!\varc!(m_ground)$) -- ($(m)!\vard!(m_ground)$);
   \draw ($(a)!\varc!(a_ground)$) -- ($(a)!\vard!(a_ground)$);
   \draw ($(b)!\varc!(b_ground)$) -- ($(b)!\vard!(b_ground)$);

   \def\vare{0.25}
   \def\varf{0.36}
   \draw[loosely dotted, line width=0.5mm] ($(byz)!\vare!(byz_ground)$)-- ($(byz)!\varf!(byz_ground)$);
   \draw[loosely dotted, line width=0.5mm] ($(km)!\vare!(km_ground)$)-- ($(km)!\varf!(km_ground)$);
   \draw[loosely dotted, line width=0.5mm] ($(m)!\vare!(m_ground)$)-- ($(m)!\varf!(m_ground)$);
   \draw[loosely dotted, line width=0.5mm] ($(a)!\vare!(a_ground)$)-- ($(a)!\varf!(a_ground)$);
   \draw[loosely dotted, line width=0.5mm] ($(b)!\vare!(b_ground)$)-- ($(b)!\varf!(b_ground)$);
   

   \def\varg{0.05}
   \def\varh{0.2}
   \draw[-latex',color=blue,dashed] ($(byz)!\varg!(byz_ground)$) -- node[below,near start]{
   } ($(a)!\varh!(a_ground)$);
   \draw[-latex',color=red,densely dotted] ($(byz)!\varg!(byz_ground)$) -- node[below,near start]{
   } ($(b)!\varh!(b_ground)$);

   \draw[-latex',color=blue,dashed] ($(km)!\varg!(km_ground)$) -- node[below,near start]{
   } ($(a)!\varh!(a_ground)$);
   \draw[-latex',color=red,densely dotted] ($(km)!\varg!(km_ground)$) -- node[below,near start]{
   } ($(b)!\varh!(b_ground)$);

   \draw[-latex',color=blue,dashed] ($(m)!\varg!(m_ground)$) -- node[below,near start]{
   } ($(a)!\varh!(a_ground)$);
   \draw[-latex',color=red,densely dotted] ($(m)!\varg!(m_ground)$) -- node[below,near start]{
   } ($(b)!\varh!(b_ground)$);

   \def\vari{0.23}
   \def\varj{0.20}
   \def\vark{0.37}
   \def\varm{0.5}
   \node[xshift=2.4em] (a_predec) at ($(a)!\vari!(a_ground)$) {\ding{192~}{\color{blue}$\lit{predec}(v_A)$}};
   \node[xshift=2.4em] (b_predec) at ($(b)!\vari!(b_ground)$) {\ding{192~}{\color{red}$\lit{predec}(v_B)$}};
   \node[xshift=-6em,align=center,style={rectangle,draw},minimum width=11em] (exp1) at ($(byz)!\varj!(byz_ground)$) {\footnotesize \ding{192}~$t+k$ lead $A$ and $B$ to \\\footnotesize disagreement on predecisions};

   \draw[-latex',color=blue,dashed] ($(byz)!\vark!(byz_ground)$) -- node[below,near start]{
   } ($(a)!\varm!(a_ground)$);
   \draw[-latex',color=red,densely dotted] ($(byz)!\vark!(byz_ground)$) -- node[below,near start]{
   } ($(b)!\varm!(b_ground)$);

   \draw[-latex',color=blue,dashed] ($(km)!\vark!(km_ground)$) -- node[below,near start]{
   } ($(a)!\varm!(a_ground)$);
   \draw[-latex',color=red,densely dotted] ($(km)!\vark!(km_ground)$) -- node[below,near start]{
   } ($(b)!\varm!(b_ground)$);
   \node[align=center,xshift=0.5em] (dingaux1) at ($(a)!\varm!(a_ground)$) {\ding{193}};
   \node[align=center,xshift=0.5em] (dingaux1) at ($(b)!\varm!(b_ground)$) {\ding{193}};
   \ding{193}~

   \def\varn{0.5}
   \def\varo{0.8}
   \node[xshift=-6em,align=center,style={rectangle,draw},minimum width=11em] (exp2) at ($(byz)!\varn!(byz_ground)$) {\ding{193}~\footnotesize $A$ and $B$ cannot terminate\\\footnotesize without hearing from at least\\\footnotesize $1$ of the $m$ baiters};

   \draw[-latex'] ($(a)!\varj!(a_ground)$) -- node[below,pos=0.5, sloped]{ \scriptsize $\ms{certificate}_{{\color{blue}(v_A)}}$
   } ($(m)!\varo!(m_ground)$);

   \draw[-latex'] ($(b)!\varj!(b_ground)$) -- node[below,near end, sloped]{ \scriptsize $\ms{certificate}_{{\color{red}(v_B)}}$
   } ($(m)!\varo!(m_ground)$);

   \node[align=center,xshift=-0.5em] (dingaux2) at ($(m)!\varo!(m_ground)$) {\ding{194}};

   \node[xshift=-6em,yshift=-0.5em,align=center,style={rectangle,draw},minimum width=11em] (exp3) at ($(byz)!\varo!(byz_ground)$) {\ding{194}~\footnotesize $m$ baiters wait to receive \\\footnotesize certificates to construct PoFs};

   \def\varp{0.85}

   \def\varq{0.88}
   \def\varr{1.1}
   \draw[-latex'] ($(m)!\varq!(m_ground)$) -- node[below,pos=0.5, sloped]{ \scriptsize $RB^1(\ms{enc\_pofs}_{({\color{blue}v_A},{\color{red}v_B})})$
   } ($(a)!\varr!(a_ground)$);

   \draw[-latex'] ($(m)!\varq!(m_ground)$) -- node[above,pos=0.7, sloped]{ \scriptsize $RB^1(\ms{enc\_pofs}_{({\color{blue}v_A},{\color{red}v_B})})$
   } ($(b)!\varr!(b_ground)$);

   \node[align=center,xshift=0.5em] (dingaux2) at ($(a)!\varr!(a_ground)$) {\ding{195}};
   \node[align=center,xshift=0.5em] (dingaux2) at ($(b)!\varr!(b_ground)$) {\ding{195}};

   \node[xshift=-6em,yshift=-0.5em,align=center,style={rectangle,draw},minimum width=11em] (exp4) at ($(byz)!\varr!(byz_ground)$) {\ding{195}~\footnotesize $m$ baiters commit to\\\footnotesize the encrypted PoFs};

   \def\vars{1.2}
   \def\vart{1.35}
   \draw[-latex'] ($(a)!\vars!(a_ground)$) -- node[above,pos=0.5, sloped]{ \scriptsize $RB^2_A$
   } ($(m)!\vart!(m_ground)$);

   \draw[-latex'] ($(b)!\vars!(b_ground)$) -- node[below,pos=0.7, sloped]{ \scriptsize $RB^2_B$
   } ($(m)!\vart!(m_ground)$);
   \node[align=center,xshift=0.5em,yshift=-0.3em] (dingaux2) at ($(m)!\vart!(m_ground)$) {\ding{196}};

   \node[xshift=-6em,yshift=-0.5em,align=center,style={rectangle,draw},minimum width=11em] (exp4) at ($(byz)!\vart!(byz_ground)$) {\ding{196}~\footnotesize $m$ baiters wait to confirm\\\footnotesize they are valid candidates};
   
   \def\varu{1.4}
   \def\varv{1.55}
   \draw[-latex'] ($(m)!\varu!(m_ground)$) -- node[below,pos=0.5, sloped]{ \scriptsize reveal $\ms{key}$
   } ($(a)!\varv!(a_ground)$);

   \draw[-latex'] ($(m)!\varu!(m_ground)$) -- node[above,pos=0.7, sloped]{ \scriptsize reveal $\ms{key}$
   } ($(b)!\varv!(b_ground)$);

   \node[align=center,xshift=0.5em] (dingaux2) at ($(a)!\varv!(a_ground)$) {\ding{197}};
   \node[align=center,xshift=0.5em] (dingaux2) at ($(b)!\varv!(b_ground)$) {\ding{197}};
   
   \node (aux) at ($(byz)!0.5!(b)$) {};
   \node (aux_ground) at ($(byz_ground)!0.5!(b_ground)$) {};
   \node[align=center,minimum width=35em,xshift=-1em,style={rectangle,draw},yshift=-0.7em] at ($(aux)!1.75!(aux_ground)$){\ding{197}~partitions of correct players $A$ and $B$ $(i)$ discover disagreement, \\ $(ii)$ select a winner of the reward at random out of the $m$ baiters,\\$(iii)$ punish the rest of $t+k-1$ deviants, and \\$(iv)$ resolve the disagreement on predecisions to agree on decision};
 \end{tikzpicture}
 \caption{Extended example execution of the \Huntsman protocol. First, \ding{192}~ all $t$ Byzantine and $k$ rational players collude to cause a disagreement on the output of the accountable consensus protocol, resulting in $A$ and $B$ predeciding different outputs. Then, \ding{193}~ $m$ of the $k$ rational players decide to bait while executing the BFTCR protocol, preventing $A$ and $B$ from deciding their disagreeing predecisions. As such, \ding{194}~the $m$ baiters wait until they receive proof of the disagreement on predecisions, to then \ding{195}~commit to the encrypted PoFs. Finally, \ding{196}~once they deliver as many second reliable broadcast from $A$ and $B$ as possible confirming that correct players delivered their PoFs encrypted commitment, then \ding{197} the $m$ baiters prove the disagreement revealing the proofs-of-fraud in the BFTCR protocol. Hence, neither $A$ nor $B$ decide their conflicting predecisions, but instead reward one of the $m$ baiters, punish the rest of $t+k-1$ players responsible for the disagreement on predecisions, and resolve the disagreement, deciding one of $v_A$ or $v_B$, or, depending on the application, merging both.}
 \label{fig:diag2}
  \end{figure}
\section{Discussion: paying a reward at no cost to non-deviants.}  One might think that
implementing a baiting strategy with a reward and deposits might not
be enough: we need to discourage coalitions from actually playing the
baiting strategy, since the system would have to pay the reward
$\mathcal{R}$, and thus the coalition can effectively steal some funds
from the system. However, if the system can use the deposited amount
$\mathcal{L}$ from at least $t_0$ certified fraudsters in the coalition
to pay for the baiting reward $\mathcal{R}$, then the system does not
lose any funds (\zeroloss), while obtaining agreement
(\baitingagreement).

Furthermore, notice that if the coalition consists entirely of
rational players then they do not actually play this strong baiting
strategy since, by the definition of strong baiting strategy, they all
individually lose more than they can gain from deviating. Even if the
presence of Byzantine players leads to a baiter being paid, agreement
will still be guaranteed at no cost to non-deviating players. This
leaves the open question of how likely it is that Byzantine players
with unexpected utilities but possibly with the goal to break the system would be
interested in giving their funds for free to rational players, if it
does not cause some damage on non-deviating players or on the system
itself. In other words, with a more refined, realistic modelling of
Byzantine players, it is very likely that the very correctness of the
\TRAP protocol will be enough of a deterrent from deviating, which would lead
to agreement directly at the predecision level.
\end{document}